\newcommand{\mvspace}[1]{\vspace{#1}}
\renewcommand\footnotetextcopyrightpermission[1]{}
\algnewcommand{\algorithmicassumption}{\textbf{Requirement:}}
\algnewcommand{\Assume}{\item[\algorithmicassumption]}
\algnewcommand{\CommentLine}[1]{\(\triangleright\;\;\) \emph{\small #1} \(\;\;\triangleleft\)} %
\algnewcommand{\InlineIf}[2]{%
\algorithmicif\ #1\ \algorithmicthen\ #2}
\algnewcommand{\InlineIfElse}[3]{%
\algorithmicif\ #1\ \algorithmicthen\ #2\ \algorithmicelse\ #3}
\algnewcommand{\InlineWhile}[2]{%
\algorithmicwhile\ #1\ \algorithmicdo\ #2}
\algnewcommand{\InlineForAll}[2]{%
\algorithmicforall\ #1\ \algorithmicdo\ #2}
\newcolumntype{L}{>{$}p{1.9cm}<{$\hfil}} %
\def\mathcolor#1#{\@mathcolor{#1}}
\def\@mathcolor#1#2#3{%
\protect\leavevmode
\begingroup
\color#1{#2}#3%
\endgroup
}
\def\triadone{brown}
\def\triadtwo{red}
\def\triadthree{blue}
\def\triadfour{green}
\def\triadfive{purple}
\def\triadsix{dimgray}
\def\triadseven{chocolate}
\newtheorem{definition}{Definition}[section]
\newtheorem{lemma}{Lemma}[section]
\newtheorem{remark}{Remark}[section]
\newtheorem{strategy}{Strategy}[section]
\newtheorem{example}{Example}
\declaretheorem[name=Theorem,numberwithin=section]{thm}
\def\matrixsize#1#2{{{#1}\times{#2}}}
\def\IdentityMatrix{\mathrm{I}}
\def\IdentityMatrixOfSize#1{{\IdentityMatrix}_{#1}}
\def\tensorproduct{\otimes}
\def\tensor#1{{\mathcal{#1}}}
\def\Trace{\textup{Trace}}
\def\InvTranspose#1{{#1}^{-\intercal}}
\def\Inverse#1{{#1}^{-1}}
\def\Isotropy#1{\mathsf{#1}}
\def\IsotropyAction#1#2{{{#1}\diamond{#2}}}
\def\MatrixProduct#1#2{{{#1}\cdot{#2}}}
\def\card#1{|{#1}|}
\def\IsotropyComposition{\circ}
\newenvironment{smatrix}{\left(\begin{smallmatrix}}{\end{smallmatrix}\right)}
\newcommand{\Transpose}[1]{{{#1}^{\intercal}}\xspace}
\newcommand{\CTranspose}[1]{{{\overline{#1}}^{\intercal}}\xspace}
\newcommand{\syrk}{{\textsc{syrk}}\xspace}
\newcommand{\blas}{{\textsc{blas}{\footnotesize{3}}}\xspace}
\newcommand{\mathsc}[1]{{\normalfont\textsc{#1}}}
\newcommand{\Primes}{\ensuremath{\mathbb{P}}}
\newcommand{\Z}{\ensuremath{\mathbb{Z}}}
\newcommand{\K}{\ensuremath{\mathbb{K}}}
\newcommand{\F}{\ensuremath{\mathbb{F}}}
\newcommand{\RR}{\ensuremath{\mathbb{R}}}
\newcommand{\QQ}{\ensuremath{\mathbb{Q}}}
\newcommand{\CC}{\ensuremath{\mathbb{C}}}
\newcommand{\FFSoS}[2]{\ensuremath{\text{\texttt{SoS}}(#1,#2)}}
\newcommand{\FFSqrtname}{\ensuremath{\text{\texttt{sqrt}}}}
\newcommand{\FFSqrt}[2]{\ensuremath{\FFSqrtname(#2)}}
\newcommand{\GO}[1]{\ensuremath{{O\mathopen{}\left({#1}\right)\mathclose{}}}\xspace}
\newcommand{\LO}[1]{\ensuremath{{o\mathopen{}\left({#1}\right)\mathclose{}}}\xspace}
\newcommand{\Contraction}[2]{{\mathopen{}\left\{{#1},{#2}\right\}\mathclose{}}\xspace}
\newcommand{\assign}{\ensuremath{\leftarrow\xspace}}
\newcommand{\CWHILE}[2][default]{\algorithmicwhile\ #2\ %
\algorithmicdo\hfill%
\ALC@com{#1}\begin{ALC@whl}}
\newcommand{\CFORALL}[2][default]{\ALC@it\algorithmicforall\ #2\ %
\algorithmicdo\newline%
\ALC@com{#1}\begin{ALC@for}}
\newcommand{\CIF}[2][default]{\ALC@it\algorithmicif\ #2\ %
\algorithmicthen\hfill%
\ALC@com{#1}\begin{ALC@if}}
\newcommand{\FORALLDOEND}[2]{\ALC@it\algorithmicforall\ #1\ \algorithmicdo\ #2 \algorithmicendfor}
\title{On Fast Multiplication of a Matrix by its Transpose}
\author{Jean-Guillaume Dumas}
\affiliation{%
  \institution{Universit\'e Grenoble Alpes}
  \department{Laboratoire Jean Kuntzmann, CNRS}
  \streetaddress{700 avenue centrale, IMAG --- CS 40700}
  \city{UMR 5224, 38058 Grenoble}
  \country{France}
}
\author{Cl\'ement Pernet}
\affiliation{
  \institution{Universit\'e Grenoble Alpes}
  \department{Laboratoire Jean Kuntzmann, CNRS}
  \streetaddress{700 avenue centrale, IMAG --- CS 40700}
  \city{UMR 5224, 38058 Grenoble}
  \country{France}
}
\author{Alexandre Sedoglavic}
\affiliation{
  \institution{Universit\'e de Lille}
  \department{UMR CNRS 9189 CRISTAL}
  \streetaddress{Cit\'e scientifique}
  \city{59650 Villeneuve d'Ascq}
  \country{France}
}
\begin{document}
\begin{abstract}
We present a non-commutative algorithm for the multiplication of
a~${{2}\times{2}}$-block-matrix by its transpose using~$5$ block
products ($3$ recursive calls and~$2$ general products)
over~$\mathbb{C}$ or any field of prime characteristic.
We use geometric considerations on the space of bilinear forms describing~${{2}\times{2}}$ matrix products to obtain this algorithm and we show how to reduce the number of involved additions.
The resulting algorithm for arbitrary dimensions is a reduction of multiplication of a matrix by its transpose to general matrix product, improving by a constant factor previously known reductions.
Finally we propose schedules with low memory footprint that support a fast and memory efficient practical implementation over a prime field.
To conclude, we show how to use our result in~${\MatrixProduct{L}{\MatrixProduct{D}{\Transpose{L}}}}$ factorization.
\end{abstract}
\begin{CCSXML}
<ccs2012>
<concept>
<concept_id>10010147.10010148.10010149.10010153</concept_id>
<concept_desc>Computing methodologies~Exact arithmetic algorithms</concept_desc>
<concept_significance>500</concept_significance>
</concept>
<concept>
<concept_id>10010147.10010148.10010149.10010158</concept_id>
<concept_desc>Computing methodologies~Linear algebra algorithms</concept_desc>
<concept_significance>500</concept_significance>
</concept>
</ccs2012>
\end{CCSXML}

\ccsdesc[500]{Computing methodologies~Exact arithmetic algorithms}
\ccsdesc[500]{Computing methodologies~Linear algebra algorithms}
\keywords{algebraic complexity, fast matrix multiplication, SYRK, rank-k
  update, Symmetric matrix, Gram matrix, Wishart matrix}
\maketitle
\section{Introduction}\label{sec:prelim}
Strassen's algorithm~\cite{Strassen:1969:GENO}, with~$7$ recursive
multiplications and~$18$ additions, was the first sub-cubic time
algorithm for matrix product, with a cost of~$\GO{n^{2.81}}$.
Summarizing the many improvements which have happened since then, the cost of multiplying two arbitrary~$\matrixsize{n}{n}$ matrices~$\GO{n^{\omega}}$ will be denoted by~${\mathrm{MM}_{\omega}(n)}$ (see~\cite{LeGall:2014:fmm} for the best theoretical value of~$\omega$ known to date).
\par
We propose a new algorithm for the computation of the product~${\MatrixProduct{A}{\Transpose{A}}}$ of a~${\matrixsize{2}{2}}$-block-matrix by its transpose using only~$5$ block multiplications over some base field, instead of~$6$ for the natural divide \& conquer algorithm.
For this product, the best previously known complexity bound was dominated by~${\frac{2}{2^{\omega}-4}\mathrm{MM}_{\omega}(n)}$ over any field (see~\cite[\S~6.3.1]{jgd:2008:toms}).
Here, we establish the following result:
\begin{theorem}\label{thm:main}
The product of an~$\matrixsize{n}{n}$ matrix by its transpose can be computed in~${\frac{2}{2^{\omega}-3}\textrm{MM}_{\omega}(n)}$ field operations over a base field for which there exists a skew-orthogonal matrix.
\end{theorem}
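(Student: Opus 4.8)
The plan is to first prove the $\matrixsize{2}{2}$ block identity announced in the introduction---that $\MatrixProduct{A}{\Transpose{A}}$ can be computed with $5$ block multiplications, of which $3$ are themselves products of a block by its own transpose (hence amenable to recursion) and $2$ are general block products---and then to unroll the resulting recurrence. Writing $A$ in $\matrixsize{2}{2}$ block form with blocks $A_{ij}$, the product $\MatrixProduct{A}{\Transpose{A}}$ has symmetric diagonal blocks $A_{11}\Transpose{A_{11}}+A_{12}\Transpose{A_{12}}$ and $A_{21}\Transpose{A_{21}}+A_{22}\Transpose{A_{22}}$, together with the off-diagonal block $A_{21}\Transpose{A_{11}}+A_{22}\Transpose{A_{12}}$ (whose transpose fills the remaining corner). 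The naive divide \& conquer scheme spends $6$ products on these, and the goal is to save one by a Strassen-type combination in which the unwanted cross terms cancel.

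The role of the skew-orthogonal matrix $Y$---a matrix satisfying $\MatrixProduct{Y}{\Transpose{Y}}=-\IdentityMatrix$---is precisely to engineer this cancellation: forming products of the shape $(A_{ij}+\MatrixProduct{A_{k\ell}}{Y})(\cdots)$ makes the spurious symmetric contributions annihilate, because $\MatrixProduct{Y}{\Transpose{Y}}$ supplies the compensating sign. This is the step that consumes the field hypothesis: such a $Y$ exists over $\CC$ (take $Y=\sqrt{-1}\,\IdentityMatrix$, so that $\MatrixProduct{Y}{\Transpose{Y}}=-\IdentityMatrix$) and over any field of prime characteristic, where $-1$ is a sum of squares and a skew-orthogonal matrix can be assembled accordingly. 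Verifying the identity is then a direct expansion: substitute the chosen combinations into the $5$ products, collect terms, and check each of the three output blocks using $\MatrixProduct{Y}{\Transpose{Y}}=-\IdentityMatrix$, confirming in particular that the three recursive products retain the genuine ``matrix-times-its-own-transpose'' form so that recursion applies.

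Granting the block identity, let $T(n)$ be the arithmetic cost of $\MatrixProduct{A}{\Transpose{A}}$ for $A$ of size $\matrixsize{n}{n}$. The three recursive calls each cost $T(n/2)$, the two general products each cost $\mathrm{MM}_{\omega}(n/2)$, and assembling the blocks through the pre- and post-additions costs $\GO{n^2}$, so that
\[
  T(n) = 3\,T(n/2) + 2\,\mathrm{MM}_{\omega}(n/2) + \GO{n^2}.
\]
Using $\mathrm{MM}_{\omega}(n/2)=2^{-\omega}\mathrm{MM}_{\omega}(n)$ and the ansatz $T(n)=K\,\mathrm{MM}_{\omega}(n)$, the dominant balance yields $K = 3\cdot 2^{-\omega}K + 2\cdot 2^{-\omega}$, i.e.\ $K(2^{\omega}-3)=2$, whence $K=\frac{2}{2^{\omega}-3}$; equivalently, summing the geometric series $\sum_k (3/2^{\omega})^k$ gives the same constant. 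Since $\omega\ge 2 > \log_2 3$, the factor $3\cdot 2^{-\omega}<1$ guarantees that the $3\,T(n/2)$ branch does not dominate and that the $\GO{n^2}$ additions are absorbed into lower-order terms, so the closed form $T(n)=\frac{2}{2^{\omega}-3}\mathrm{MM}_{\omega}(n)$ holds, which is exactly the claim.

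The main obstacle I expect is the block identity itself: pinning down the precise $5$ products and the exact manner in which $Y$ enters, so that every cross term cancels while the three recursive products remain bona fide products of a matrix by its own transpose. Once that bilinear identity is in hand, the complexity unrolling above is routine.
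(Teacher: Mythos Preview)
Your plan matches the paper's approach exactly: the paper derives a $\matrixsize{2}{2}$ block scheme with precisely $3$ recursive \syrk{} calls and $2$ general products, then solves the same recurrence $T(n)=3T(n/2)+2\,\mathrm{MM}_{\omega}(n/2)+\GO{n^{2}}$ to obtain the constant $\frac{2}{2^{\omega}-3}$. The gap you flag---the explicit five products---is filled by the paper's \cref{alg:wishartp}: with $S_{1}=(A_{21}-A_{11})Y$, $S_{2}=A_{22}-A_{21}Y$, $S_{3}=S_{1}-A_{22}$, $S_{4}=S_{3}+A_{12}$, the products are $P_{1}=A_{11}\Transpose{A_{11}}$, $P_{2}=A_{12}\Transpose{A_{12}}$, $P_{5}=S_{3}\Transpose{S_{3}}$ (the three recursive ones) and $P_{3}=A_{22}\Transpose{S_{4}}$, $P_{4}=S_{1}\Transpose{S_{2}}$ (the two general ones), with the verification carried out in the appendix using $\MatrixProduct{Y}{\Transpose{Y}}=-\IdentityMatrix$ just as you anticipated.
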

Our algorithm is derived from the class of Strassen-like algorithms multiplying~$\matrixsize{2}{2}$ matrices in~$7$ multiplications.
Yet it is a reduction of multiplying a matrix by its transpose to general matrix multiplication, thus supporting any admissible value for~$\omega$.
By exploiting the symmetry of the problem, it requires about half of the arithmetic cost of general matrix multiplication when~$\omega$ is~$\log_{2}{7}$.
\par
We focus on the computation of the product of an~$\matrixsize{n}{k}$ matrix by its transpose and possibly accumulating the result to another matrix.
Following the terminology of the {\blas} standard~\cite{DDHD90}, this operation is a symmetric rank~$k$ update ({\syrk} for short).
\section{Matrix product algorithms encoded by tensors}
Considered as~$\matrixsize{2}{2}$ matrices, the matrix product~${C=\MatrixProduct{A}{B}}$ could be computed using Strassen algorithm by performing the following computations (see~\cite{Strassen:1969:GENO}):
\begin{equation}
\label{eq:StrassenMultiplicationAlgorithm}
\begin{array}{ll}
\mathcolor{\triadone}{\rho_{1}}\leftarrow{\mathcolor{\triadone}{a_{11}}(\mathcolor{\triadone}{b_{12}-b_{22}})},
&
\\
\mathcolor{\triadtwo}{\rho_{2}}\leftarrow{(\mathcolor{\triadtwo}{a_{11}+a_{12}})\mathcolor{\triadtwo}{b_{22}}},
&
\mathcolor{\triadfour}{\rho_{4}}\leftarrow{(\mathcolor{\triadfour}{a_{12}-a_{22}})(\mathcolor{\triadfour}{b_{21}+b_{22}})},
\\
\mathcolor{\triadthree}{\rho_{3}}\leftarrow{(\mathcolor{\triadthree}{a_{21}+a_{22}}) \mathcolor{\triadthree}{b_{11}}},
&
\mathcolor{\triadfive}{\rho_{5}}\leftarrow{(\mathcolor{\triadfive}{a_{11}+a_{22}})(\mathcolor{\triadfive}{b_{11}+b_{22}})},
\\
\mathcolor{\triadsix}{\rho_{6}}\leftarrow{\mathcolor{\triadsix}{a_{22}}(\mathcolor{\triadsix}{b_{21}-b_{11}})},
&
\mathcolor{\triadseven}{\rho_{7}}\leftarrow{(\mathcolor{\triadseven}{a_{21}-a_{11}})(\mathcolor{\triadseven}{b_{11}+b_{12}})},
\\[\smallskipamount]
\multicolumn{2}{c}{
\begin{smatrix} c_{11} &c_{12} \\ c_{21} &c_{22} \end{smatrix}
=
\begin{smatrix}
\mathcolor{\triadfive}{\rho_{5}} + \mathcolor{\triadfour}{\rho_{4}} - \mathcolor{\triadtwo}{\rho_{2}} + \mathcolor{\triadsix}{\rho_{6}} &
\mathcolor{\triadsix}{\rho_{6}} + \mathcolor{\triadthree}{\rho_{3}} \\
\mathcolor{\triadtwo}{\rho_{2}} + \mathcolor{\triadone}{\rho_{1}}&
\mathcolor{\triadfive}{\rho_{5}} + \mathcolor{\triadseven}{\rho_{7}} + \mathcolor{\triadone}{\rho_{1}}- \mathcolor{\triadthree}{\rho_{3}}
\end{smatrix}\!.}
\end{array}
\end{equation}
In order to consider this algorithm under a geometric standpoint, we present it as a tensor.
Matrix multiplication is a bilinear map:
\begin{equation}
\begin{array}{ccl}
\K^{\matrixsize{m}{n}} \times \K^{\matrixsize{n}{p}} & \rightarrow &\K^{\matrixsize{m}{p}}, \\
(X,Y) &\rightarrow & \MatrixProduct{X}{Y},
\end{array}
\end{equation}
where the spaces~$\K^{\matrixsize{a}{b}}$ are finite vector spaces that can be endowed with the Frobenius inner product~${{\langle M,N\rangle}={\Trace({\MatrixProduct{\Transpose{M}}{N}})}}$.
Hence, this inner product establishes an isomorphism between~$\K^{\matrixsize{a}{b}}$ and its dual space~$\bigl(\K^{\matrixsize{a}{b}}\bigr)^{\star}$ allowing for example to associate matrix multiplication and the trilinear form~${\Trace(\MatrixProduct{\Transpose{Z}}{\MatrixProduct{X}{Y}})}$:
\begin{equation}
\label{eq:TrilinearForm}
\begin{array}{ccc}
\K^{\matrixsize{m}{n}} \times \K^{\matrixsize{n}{p}} \times {(\K^{\matrixsize{m}{p}})}^{\star}&\rightarrow & \K, \\
(X,Y,\Transpose{Z}) &\rightarrow & \langle {Z},\MatrixProduct{X}{Y}\rangle.
\end{array}
\end{equation}
As by construction, the space of trilinear forms is the canonical dual space of order three tensor product, we could associate the Strassen multiplication algorithm~(\ref{eq:StrassenMultiplicationAlgorithm}) with the tensor~$\tensor{S}$ defined by:
\begin{equation}
\label{eq:StrassenTensor}
\begin{array}{r}
\sum_{i=1}^{7}{S_{i1}}\!\tensorproduct\!{S_{i2}}\!\tensorproduct\!{S_{i3}}=
\mathcolor{\triadone}{\begin{smatrix}1&0\\0&0\\\end{smatrix}\!\tensorproduct\!\begin{smatrix}0&1\\0&-1\\\end{smatrix}\!\tensorproduct\!\begin{smatrix}0&0\\1&1\\\end{smatrix}
}
\!+\!\\[\bigskipamount]
\mathcolor{\triadtwo}{\begin{smatrix}1&1\\0&0\\\end{smatrix}\!\tensorproduct\!\begin{smatrix}0&0\\0&1\\\end{smatrix}\!\tensorproduct\!\begin{smatrix}-1&0\\1&0\\\end{smatrix}}
\!+\!
\mathcolor{\triadthree}{\begin{smatrix}0&0\\1&1\\\end{smatrix}\!\tensorproduct\!\begin{smatrix}1&0\\0&0\\\end{smatrix}\!\tensorproduct\!\begin{smatrix}0&1\\0&-1\end{smatrix}}
\!+\!\\[\bigskipamount]
\mathcolor{\triadfour}{\begin{smatrix}0&1\\0&-1\\\end{smatrix}\!\tensorproduct\!\begin{smatrix}0&0\\1&1\\\end{smatrix}\!\tensorproduct\!\begin{smatrix}1&0\\0&0\\\end{smatrix}}
\!+\!
\mathcolor{\triadfive}{{\begin{smatrix}1&0\\0&1\end{smatrix}}\!\tensorproduct\!{\begin{smatrix}1&0\\0&1\end{smatrix}}\!\tensorproduct\!\begin{smatrix}1&0\\0&1\\\end{smatrix}}
\!+\!\\[\bigskipamount]
\mathcolor{\triadsix}{\begin{smatrix}0&0\\0&1\\\end{smatrix}\!\tensorproduct\!\begin{smatrix}-1&0\\1&0\\\end{smatrix}\!\tensorproduct\!\begin{smatrix}1&1\\0&0\\\end{smatrix}}
\!+\!
\mathcolor{\triadseven}{\begin{smatrix}-1&0\\1&0\\\end{smatrix}\!\tensorproduct\!\begin{smatrix}1&1\\0&0\\\end{smatrix}\!\tensorproduct\!\begin{smatrix}0&0\\0&1\\\end{smatrix}}
\!
\end{array}
\end{equation}
in~${{(\K^{\matrixsize{m}{n}})}^{\star} \tensorproduct {(\K^{\matrixsize{n}{p}})}^{\star} \tensorproduct \K^{\matrixsize{m}{p}}}$ with~${m=n=p=2}$.
Given any couple~$(A,B)$ of~$\matrixsize{2}{2}$-matrices, one can explicitly retrieve from tensor~$\tensor{S}$ the Strassen matrix multiplication algorithm computing~$\MatrixProduct{A}{B}$ by the \emph{partial} contraction~${\Contraction{\tensor{S}}{A\tensorproduct B}}$:
\begin{equation}
\label{eq:TensorAction}
\begin{array}{c}
\left({(\K^{\matrixsize{m}{n}})}^{\star}\! \tensorproduct \! {(\K^{\matrixsize{n}{p}})}^{\star}\! \tensorproduct\! \K^{\matrixsize{m}{p}} \right)\! \tensorproduct \! \left(
\K^{\matrixsize{m}{n}}\! \tensorproduct\! \K^{\matrixsize{n}{p}} \right)
\!\rightarrow\!
\K^{\matrixsize{m}{p}}, \\[\smallskipamount]
\tensor{S}\tensorproduct (A \tensorproduct B) \rightarrow
\sum_{i=1}^{7} \langle {S_{i1}}, A \rangle
\langle {S_{i2}}, B \rangle S_{i3},
\end{array}
\end{equation}
while the \emph{complete} contraction~${\Contraction{\tensor{S}}{{A}\tensorproduct{B}\tensorproduct{\Transpose{C}}}}$ is~${\Trace(\MatrixProduct{\MatrixProduct{A}{B}}{C})}$.
\par
The tensor formulation of matrix multiplication algorithm gives explicitly its symmetries (a.k.a.\ \emph{isotropies}).
As this formulation is associated to the trilinear form~$\Trace(\MatrixProduct{\MatrixProduct{A}{B}}{C})$, given three invertible matrices~$U,V,W$ of suitable sizes and the classical properties of the trace, one can remark that~${\Trace(\MatrixProduct{A}{\MatrixProduct{B}{C}})}$ is equal to:
	\begin{equation}
		\begin{array}{l}
	\label{eq:isotropy}
  \Trace\bigl(\Transpose{(\MatrixProduct{\MatrixProduct{A}{B}}{C})}\bigr)
  =\Trace(\MatrixProduct{C}{\MatrixProduct{A}{B}})
	=\Trace(\MatrixProduct{B}{\MatrixProduct{C}{A}}),\\
	\textrm{and}\ \Trace\bigl(\MatrixProduct{\Inverse{U}}{\MatrixProduct{A}{V}}
  \cdot \Inverse{V} \cdot B \cdot W \cdot \Inverse{W} \cdot C \cdot U\bigr).
		\end{array}
	\end{equation}
These relations illustrate the following theorem:
\begin{theorem}[{\cite[\S~2.8]{groot:1978a}}]
The isotropy group of the~$\matrixsize{n}{n}$ matrix multiplication tensor is~${{{\mathsc{psl}^{\pm}({\K^{n}})}^{\times 3}}\!\rtimes{\mathfrak{S}_{3}}}$, where~$\mathsc{psl}$ stands for the group of matrices of determinant~${\pm{1}}$ and~$\mathfrak{S}_{3}$ for the symmetric group on~$3$ elements.
\end{theorem}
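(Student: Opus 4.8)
The plan is to prove the two inclusions separately. The first --- that every transformation of the announced shape is an isotropy --- is exactly what the identities in~\eqref{eq:isotropy} record. The cyclic invariance ${\Trace(\MatrixProduct{\MatrixProduct{A}{B}}{C})=\Trace(\MatrixProduct{\MatrixProduct{B}{C}}{A})}$ shows that the cyclic shift of the three tensor factors preserves the trilinear form, while ${\Trace(M)=\Trace(\Transpose{M})}$ applied to ${M=\MatrixProduct{\MatrixProduct{A}{B}}{C}}$ shows that the order-reversing transpose ${(A,B,C)\mapsto(\Transpose{C},\Transpose{B},\Transpose{A})}$ is an isotropy too; together these generate the factor~$\mathfrak{S}_{3}$. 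The telescoping computation ${\Trace\bigl(\MatrixProduct{\Inverse{U}}{AV}\cdot\MatrixProduct{\Inverse{V}}{BW}\cdot\MatrixProduct{\Inverse{W}}{CU}\bigr)=\Trace(\Inverse{U}ABCU)=\Trace(ABC)}$ shows that every sandwiching triple is an isotropy, contributing the factor~${(\mathsc{psl}^{\pm}(\K^{n}))^{\times3}}$ once the scaling redundancy is removed.

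For the reverse inclusion I would first dispose of the permutation: composing a given isotropy with a suitable element of~$\mathfrak{S}_{3}$ reduces matters to an isotropy ${(g_{1},g_{2},g_{3})}$ that fixes the three factors individually, that is ${\Trace(\MatrixProduct{\MatrixProduct{g_{1}(A)}{g_{2}(B)}}{g_{3}(C)})=\Trace(\MatrixProduct{\MatrixProduct{A}{B}}{C})}$ for all~$A,B,C$. Since the trace form ${\langle X,Y\rangle=\Trace(\MatrixProduct{X}{Y})}$ is symmetric and nondegenerate, freeing the argument~$C$ converts this identity into the autotopy relation ${\MatrixProduct{g_{1}(A)}{g_{2}(B)}=h(\MatrixProduct{A}{B})}$, where~$h$ is the inverse of the adjoint of~$g_{3}$. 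Evaluating this relation at ${A=\IdentityMatrix}$ and at ${B=\IdentityMatrix}$ and writing ${v=g_{1}(\IdentityMatrix)}$, I would then verify that ${A\mapsto\MatrixProduct{\Inverse{v}}{g_{1}(A)}}$ is a unital endomorphism of the matrix algebra, hence --- being bijective --- an automorphism of this central simple algebra. Because the reduced relation is order-preserving rather than order-reversing, only genuine automorphisms arise at this stage, the anti-automorphisms having been absorbed into the transpose element of~$\mathfrak{S}_{3}$.

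The decisive step is then the Skolem--Noether theorem: every automorphism of~${M_{n}(\K)}$ is inner, so ${\MatrixProduct{\Inverse{v}}{g_{1}(A)}=\MatrixProduct{\MatrixProduct{P}{A}}{\Inverse{P}}}$ for some invertible~$P$, and back-substitution forces ${g_{1},g_{2},g_{3}}$ to be two-sided multiplications of precisely the sandwiching shape, with~$P$ determined only up to a nonzero scalar. Assembling the pieces, the autotopies form a normal subgroup on which~$\mathfrak{S}_{3}$ acts through the cyclic-shift and transpose symmetries of the first paragraph, which yields the semidirect product; the scalar indeterminacy in each conjugating matrix, together with the freedom to rescale ${U,V,W}$ so that their determinants become~${\pm1}$, is what collapses the raw parameter group to ${(\mathsc{psl}^{\pm}(\K^{n}))^{\times3}}$. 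I expect the main obstacle to be this final normalisation rather than the invocation of Skolem--Noether: one must translate the autotopy relation cleanly into an honest algebra automorphism --- choosing the correct translation by~$\IdentityMatrix$ and tracking the adjoint of~$g_{3}$ --- and then pin down the precise determinant-$\pm1$ representative, whereas the appeal to Skolem--Noether is immediate once the automorphism has been exhibited.
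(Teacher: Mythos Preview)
The paper does not prove this theorem: it is quoted from de~Groot~\cite[\S~2.8]{groot:1978a}, and the surrounding text merely \emph{illustrates} the forward inclusion via the trace identities~\eqref{eq:isotropy}. There is thus no in-paper argument to compare your proposal against.

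That said, your plan is sound and goes well beyond what the paper offers. The forward inclusion you give coincides with the paper's illustration. For the reverse, your reduction of a factor-preserving isotropy to an autotopy ${\MatrixProduct{g_{1}(A)}{g_{2}(B)}=h(\MatrixProduct{A}{B})}$ via nondegeneracy of the trace form, followed by the translation ${A\mapsto \MatrixProduct{\Inverse{v}}{g_{1}(A)}}$ with ${v=g_{1}(\IdentityMatrix)}$ to obtain a unital algebra automorphism, and then Skolem--Noether, is the standard and correct route (and is in spirit how de~Groot proceeds). One detail worth making explicit when you write it out is why ${v=g_{1}(\IdentityMatrix)}$ is an invertible \emph{matrix} (as opposed to~$g_{1}$ being an invertible linear map): since ${h(B)=\MatrixProduct{v}{g_{2}(B)}}$ is a linear bijection of~$\K^{n\times n}$, its image contains full-rank matrices, which forces~$v$ to have full rank. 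Your caution about the final normalisation is also well placed: the passage from the raw parameter group ${\mathrm{GL}_{n}(\K)^{\times 3}}$ modulo the scalar redundancy~${(\K^{*})^{\times 3}}$ to the announced ${(\mathsc{psl}^{\pm}(\K^{n}))^{\times 3}}$ is exactly where field-dependent issues about $n$-th roots arise, and the paper's somewhat idiosyncratic notation~$\mathsc{psl}^{\pm}$ does not address this point.
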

The following definition recalls the \emph{sandwiching} isotropy on matrix multiplication tensor:
\begin{definition}\label{def:sandwiching}
Given~${\Isotropy{g}={(U\times V \times W)}}$ in~${\mathsc{psl}^{\pm}({\K}^{n})}^{\times 3}$, its action~${\IsotropyAction{\Isotropy{g}}{\tensor{S}}}$ on a tensor~$\tensor{S}$ is given by~${\sum_{i=1}^{7} \IsotropyAction{\Isotropy{g}}{(S_{i1}\tensorproduct{} S_{i2}\tensorproduct{} S_{i3})}}$ where the term~${\IsotropyAction{\Isotropy{g}}{(S_{i1}\tensorproduct{} S_{i2}\tensorproduct{} S_{i3})}}$ is equal to:
\begin{equation}
\label{eq:sandwiching}
{\left(\MatrixProduct{\InvTranspose{U}}{\MatrixProduct{S_{i1}}{\Transpose{V}}}\right)}
\tensorproduct
{\left(\MatrixProduct{\InvTranspose{V}}{\MatrixProduct{S_{i2}}{\Transpose{W}}}\right)}
\tensorproduct
{\left(\MatrixProduct{\InvTranspose{W}}{\MatrixProduct{S_{i3}}{\Transpose{U}}}\right)}.
\end{equation}
\end{definition}
\begin{remark}\label{rem:groupcomposition}
In~${\mathsc{psl}^{\pm}({\K^{n}})}^{\times 3}$, the product~$\IsotropyComposition$ of two isotropies~$g_{1}$ defined by~${{u_{1}}\times{v_{1}}\times{w_{1}}}$ and~${g_{2}}$ by~${{u_{2}}\times{v_{2}}\times{w_{2}}}$ is the isotropy~${g_{1}\IsotropyComposition g_{2}}$ equal to~${\MatrixProduct{u_{1}}{u_{2}}\times{}\MatrixProduct{v_{1}}{v_{2}}\times{}\MatrixProduct{w_{1}}{w_{2}}}$.
Furthermore,the complete contraction~${\Contraction{{g_{1}}\IsotropyComposition{g_{2}}}{{A}\tensorproduct{B}\tensorproduct{C}}}$ is equal to~${\Contraction{g_{2}}{\IsotropyAction{\Transpose{g_{1}}}{{A}\tensorproduct{B}\tensorproduct{C}}}}$.
\end{remark}
The following theorem shows that all~$\matrixsize{2}{2}$-matrix product algorithms with~$7$ coefficient multiplications could be obtained by the action of an isotropy on Strassen tensor:
\begin{theorem}[{\cite[\S~0.1]{groot:1978}}]
The group~${{\mathsc{psl}^{\pm}({\K^{n}})}^{\times 3}}$ acts transitively on the variety of optimal algorithms for the computation of~$\matrixsize{2}{2}$-matrix multiplication.
\end{theorem}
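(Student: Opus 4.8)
The plan is to identify an optimal algorithm for~$\matrixsize{2}{2}$-matrix multiplication with a rank~$7$ decomposition~${\tensor{M}=\sum_{i=1}^{7} A_{i}\tensorproduct B_{i}\tensorproduct C_{i}}$ of the matrix multiplication tensor, each factor~$A_{i},B_{i},C_{i}$ being read as a~$\matrixsize{2}{2}$ matrix through the Frobenius pairing, and to show that the set of all such decompositions forms a single orbit under the sandwiching action of Definition~\ref{def:sandwiching}. Since Strassen's decomposition~$\tensor{S}$ of~\eqref{eq:StrassenTensor} is one such algorithm, it then suffices to exhibit, for an arbitrary optimal~$\tensor{M}$, an isotropy~${\Isotropy{g}=(U\times V\times W)}$ carrying~$\tensor{M}$ onto~$\tensor{S}$.

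First I would record the constraints that any rank~$7$ decomposition must satisfy. Because the three flattenings of~$\tensor{M}$ all have rank~$4$ (the tensor is concise), each of the three families~${\{A_{i}\}}$,~${\{B_{i}\}}$,~${\{C_{i}\}}$ must linearly span the whole~$4$-dimensional space~$\K^{\matrixsize{2}{2}}$. Next, sandwiching sends~$A_{i}$ to~${\MatrixProduct{\InvTranspose{U}}{\MatrixProduct{A_{i}}{\Transpose{V}}}}$ with~$U,V$ invertible, and likewise for the other two slots, so it preserves~$\matrixrank{A_{i}}$; consequently the multiset of ranks appearing in a given factor slot is an invariant of the orbit, while the~$\mathfrak{S}_{3}$ part only permutes the three slots. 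Direct inspection of~\eqref{eq:StrassenTensor} shows that in~$\tensor{S}$ every slot carries the rank multiset~${\{1,1,1,1,1,1,2\}}$: six rank-one factors and the single invertible term~$\rho_{5}$ built from the identity.

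The crux, and the main obstacle, is a rigidity lemma: \emph{every} optimal decomposition has exactly this profile, six rank-one matrices and one invertible matrix in each slot. I would establish it by a case analysis excluding all other rank distributions, using the defining bilinear identities~${\MatrixProduct{X}{Y}=\sum_{i}\langle A_{i},X\rangle\langle B_{i},Y\rangle C_{i}}$ together with the determinantal conditions that cut out the rank-$\le 1$ (Segre) locus inside~$\K^{\matrixsize{2}{2}}$: too many invertible factors overdetermine the trilinear form~${\Trace(\MatrixProduct{\MatrixProduct{A}{B}}{C})}$, whereas a family consisting entirely of rank-one matrices cannot account for the identity component. This step is exactly where de Groot's classification does its real work; everything after it is bookkeeping.

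With the profile fixed, I would finish by normalizing through the group. Up to the~$\mathfrak{S}_{3}$ action I may assume the distinguished invertible triple occupies the slot pattern of~$\rho_{5}$, and using the~$U,V,W$ freedom I would gauge it to~${\IdentityMatrix\tensorproduct\IdentityMatrix\tensorproduct\IdentityMatrix}$; solving the three equations~${\MatrixProduct{\InvTranspose{U}}{\MatrixProduct{A_{0}}{\Transpose{V}}}=\IdentityMatrix}$ and its cyclic companions simultaneously requires a compatibility of the shape~${\MatrixProduct{\MatrixProduct{A_{0}}{B_{0}}}{C_{0}}\propto\IdentityMatrix}$, which I would verify by evaluating the defining trilinear identity on the distinguished term. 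The residual stabilizer of this partially normalized form is finite, and the six remaining rank-one terms must then reproduce the off-diagonal structure of~${\tensor{M}-\IdentityMatrix\tensorproduct\IdentityMatrix\tensorproduct\IdentityMatrix}$ in precisely the pattern of~${\rho_{1},\dots,\rho_{4},\rho_{6},\rho_{7}}$; matching them—only a further finite choice, absorbed by the remaining symmetry—identifies~$\tensor{M}$ with~$\tensor{S}$ and yields transitivity. I expect the rank-rigidity lemma to be the entire difficulty; a complementary sanity check is a dimension count showing that a single orbit is dense in the variety of optimal algorithms, though upgrading density to transitivity again needs the classification.
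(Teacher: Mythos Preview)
The paper does not supply its own proof of this statement: the theorem is quoted verbatim from de~Groot~\cite[\S~0.1]{groot:1978} and used as a black box to justify Strategy~\ref{strategy}. There is therefore nothing in the present paper to compare your argument against.

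As an independent attempt at de~Groot's theorem, your outline has the right architecture---reduce to a single orbit by first pinning down the rank profile of the seven triples and then normalizing via the sandwiching action---and this is indeed how the original proof proceeds. But you explicitly flag the rigidity lemma (six rank-one factors plus one invertible factor in each slot) as ``the entire difficulty'' and then defer it to an unspecified case analysis. That lemma \emph{is} the theorem: everything else in your write-up is routine once it is in hand, and de~Groot's contribution is precisely a careful combinatorial argument establishing it. Your sketch does not supply that argument, nor does it indicate which substitutions into the bilinear identity or which determinantal constraints actually rule out, say, two invertible factors in one slot or seven rank-one factors. The normalization step after the lemma is also more delicate than you suggest: the compatibility condition ${\MatrixProduct{\MatrixProduct{A_{0}}{B_{0}}}{C_{0}}\propto\IdentityMatrix}$ does hold, but the residual stabilizer of~${\IdentityMatrix\tensorproduct\IdentityMatrix\tensorproduct\IdentityMatrix}$ is not finite---it is the diagonal copy of~$\mathsc{psl}^{\pm}$ acting by conjugation---and matching the remaining six rank-one terms up to this conjugation action and the~$\mathfrak{S}_{6}$ permutation of summands still requires work. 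So the proposal is a reasonable plan but not yet a proof.
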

Thus, isotropy action on Strassen tensor may define other matrix
product algorithm with interesting computational properties.
\subsection{Design of a specific~\texorpdfstring{$\matrixsize{2}{2}$}{2x2}-matrix product}
This observation inspires our general strategy to design specific algorithms suited for particular matrix product.
\begin{strategy}\label{strategy}
By applying an undetermined isotropy:
\begin{equation}
\label{eq:GenericIsotropy}
\Isotropy{g} = U \times V \times W =
\begin{smatrix} u_{11} & u_{12} \\ u_{21} & u_{22}\end{smatrix} \times
\begin{smatrix} v_{11} & v_{12} \\ v_{21} & v_{22}\end{smatrix} \times
\begin{smatrix} w_{11} & w_{12} \\ w_{21} & w_{22}\end{smatrix}
\end{equation}
on Strassen tensor~$\tensor{S}$, we obtain a parameterization~${\tensor{T}=\IsotropyAction{\Isotropy{g}}{\tensor{S}}}$ of all matrix product algorithms requiring~$7$ coefficient multiplications:
\begin{equation}
\label{eq:optimal2x2algoparam}
\tensor{T}=\sum_{i=1}^{7} T_{i1}\tensorproduct T_{i2}\tensorproduct T_{i3},\
T_{i1}\tensorproduct T_{i2}\tensorproduct T_{i3}
= \IsotropyAction{\Isotropy{g}}{S_{i1}\tensorproduct S_{i2}\tensorproduct S_{i3}}.
\end{equation}
Then, we could impose further conditions on these algorithms and check by a Gr\"{o}bner basis computation if such an algorithm exists.
If so, there is subsequent work to do for choosing a point on this variety;
this choice can be motivated by the additive cost bound and the scheduling property of the evaluation scheme given by this point.
\end{strategy}
Let us first illustrate this strategy with the well-known Winograd variant of Strassen algorithm presented in~\cite{Winograd:1977:complexite}.
\begin{example}\label{ex:bshouty}
Apart from the number of multiplications, it is also interesting in practice to reduce the number of additions in an algorithm.
Matrices~$S_{11}$ and~$S_{61}$ in tensor~(\ref{eq:StrassenTensor}) do not increase the additive cost bound of this algorithm.
Hence, in order to reduce this complexity in an algorithm, we could try to maximize the number of such matrices involved in the associated tensor.
To do so, we recall Bshouty's results on additive complexity of matrix product algorithms.
\begin{theorem}[\cite{bshouty:1995a}]
Let~${e_{(i,j)}=(\delta_{i,k}\delta_{j,l})_{(k,l)}}$ be the single entry elementary matrix.
A~$\matrixsize{2}{2}$ matrix product tensor could not have~$4$ such matrices as first (resp.\ second, third) component~(\cite[Lemma~8]{bshouty:1995a}).
The additive complexity bound of first and second components are equal~(\cite[eq.~(11)]{bshouty:1995a}) and at least~${4=7-3}$.
The total additive complexity of~$\matrixsize{2}{2}$-matrix product is at least~$15$~(\cite[Theorem~1]{bshouty:1995a}).
\end{theorem}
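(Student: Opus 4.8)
The plan is to read additive complexity as the minimal number of additions and subtractions in a bilinear straight-line program realising the tensor, split into three stages: the \emph{left} forms $\langle T_{i1},A\rangle$ built from the four entries of $A$, the \emph{right} forms $\langle T_{i2},B\rangle$ built from the four entries of $B$, and the \emph{output} stage recombining the seven products $\rho_{i}$ into the four entries of $C=\MatrixProduct{A}{B}$, and to prove the three assertions in increasing order of depth. For the first assertion I would exploit that the sandwiching action~\eqref{eq:sandwiching} multiplies each slice on the left and right by invertible matrices, hence preserves the rank of every $T_{i1}$; since each of the three slices of Strassen's tensor~\eqref{eq:StrassenTensor} has rank multiset $\{2,1,1,1,1,1,1\}$, de Groot's transitivity theorem recalled above makes this multiset an orbit invariant, so in every optimal algorithm exactly one first (resp.\ second, third) component has rank~$2$ and the other six have rank~$1$. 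A single-entry matrix $e_{(i,j)}$ is rank~$1$, so at most six first components can be elementary; to sharpen six to three I would expand $\sum_{i}\langle T_{i1},A\rangle\langle T_{i2},B\rangle\,T_{i3}=\MatrixProduct{A}{B}$ coefficientwise into the Brent bilinear identities and run a support analysis, showing that if four first components are single-entry forms $a_{pk}$ then the four associated products have the shape $a_{pk}\langle T_{i2},B\rangle$ and the remaining three products cannot supply the missing bilinear monomials of $\MatrixProduct{A}{B}$. This case distinction is Bshouty's Lemma~8 and is the first genuine obstacle, since one must eliminate every support pattern (including repeated elementary components, which already collapse the rank below seven) rather than a single normal form.

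For the second assertion I would argue by counting in the linear straight-line model: each addition outputs exactly one new value, a form equal to a single entry of $A$ costs nothing, and by the first assertion at most three of the seven left forms are of this type; hence at least $7-3=4$ of them must each be finalised by a distinct addition, giving the lower bound~$4$. Equality of the left and right bounds then follows from the presence of~$\mathfrak{S}_{3}$ in the isotropy group: the transposition realising $\MatrixProduct{A}{B}\mapsto\Transpose{(\MatrixProduct{A}{B})}=\MatrixProduct{\Transpose{B}}{\Transpose{A}}$ interchanges the first and second components up to transposition, an operation that leaves additive complexity unchanged, so the two stage-minima coincide.

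For the total bound I would add the two input stages, contributing at least $4+4=8$, to a lower bound of~$7$ on the output stage. Here lies the hard part: additions in the output stage can be heavily shared, as Winograd's scheme already exhibits three partial sums feeding all four entries of $C$, so the naive estimate ``total number of summands minus four'' is far too weak; an honest bound must track which partial sums can simultaneously serve several outputs, and establishing that no sharing pattern beats~$7$ is the core of Bshouty's Theorem~1 and the main obstacle of the whole statement. Granting the output bound~$7$, the three stages sum to $8+7=15$, which matches the Winograd variant and proves optimality.
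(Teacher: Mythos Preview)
The paper does not supply its own proof of this theorem: it is stated as a summary of three results of Bshouty (Lemma~8, equation~(11), and Theorem~1 of the cited paper), with the citations serving in lieu of argument. So there is no in-paper proof to compare your proposal against; your sketch is an attempt to reconstruct Bshouty's reasoning, which the present paper simply imports.

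As a reconstruction, your plan is broadly in the right spirit and you correctly flag the two genuine obstacles (sharpening ``at most six'' to ``at most three'' elementary first components, and the output-stage lower bound of~$7$). A few remarks. First, your rank-invariance observation under sandwiching, while correct and pleasant, does not by itself narrow six down to three; the actual content of the first assertion is precisely the support/case analysis you allude to, so that step is the proof rather than a preliminary. Second, your counting ``at least $7-3=4$ left forms must each be finalised by a distinct addition'' tacitly assumes the non-elementary forms are pairwise distinct as linear functionals; this holds because equal forms could be merged, dropping the tensor rank below~$7$, but it deserves a sentence. Third, the decomposition $4+4+7=15$ is indeed how Bshouty organises matters---the present paper elsewhere cites his Lemma~9 for the input bound~$4$ and his Lemma~2 for the output bound~$7$---so your framing matches the source; but since you explicitly grant rather than prove the~$7$, your proposal remains a sketch. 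Given that the paper itself only cites the result, that is an appropriate level of detail.
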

Following our strategy, we impose on tensor~$\tensor{T}$~(\ref{eq:optimal2x2algoparam}) the constraints
\begin{equation}
\label{eq:WinogradConstraints}
{T_{11}=e_{1,1}=\begin{smatrix} 1&0\\0&0\end{smatrix}\!,\quad T_{12}=e_{1,2},\quad T_{13}=e_{2,2}}
\end{equation}
and obtain by a Gr\"obner basis computation~\cite{FGb} that such tensors are the images of Strassen tensor by the action of the following isotropies:
\begin{equation}
\label{eq:IsotropyFromStrassen2Wingrad}
\Isotropy{w}=
	\begin{smatrix}1&0\\0&1\end{smatrix} \times
\begin{smatrix}1&-1\\0&-1\end{smatrix} \times
	\begin{smatrix}w_{11}&w_{12}\\w_{21}&w_{22}\end{smatrix}\!.
\end{equation}
The variant of the Winograd tensor~\cite{Winograd:1977:complexite}
presented with a renumbering as \cref{alg:sw} is obtained by the action of~$\Isotropy{w}$ with the specialization~${w_{12}=w_{21}=1=-w_{11},w_{22}=0}$ on the Strassen tensor~$\tensor{S}$.
While the original Strassen algorithm requires~$18$ additions, only~$15$ additions are necessary in the Winograd \cref{alg:sw}.
\end{example}
\begin{algorithm}[htbp]\caption{: $C=\mathrm{W}(A,B)$}\label{alg:sw}
\begin{algorithmic}
\Require{${A=\begin{smatrix} a_{11}&a_{12}\\ a_{21}&a_{22}\end{smatrix}}$ and~${B=\begin{smatrix} b_{11}&b_{12}\\ b_{21}&b_{22}\end{smatrix}}$;}
\Ensure{${C=\MatrixProduct{A}{B}}$}
\State{%
  \( \renewcommand{\arraycolsep}{.7mm}
  \begin{array}{LLLL}
   s_{1} \leftarrow \mathcolor{\triadfour}{a_{11} - a_{21}}, &{}
   s_{2} \leftarrow \mathcolor{\triadseven}{a_{21} + a_{22}}, &{}
   s_{3} \leftarrow \mathcolor{\triadfive}{s_{2} - a_{11}}, &{}
   s_{4} \leftarrow \mathcolor{\triadsix}{a_{12} - s_{3}}, \\
   t_{1} \leftarrow \mathcolor{\triadfour}{b_{22} - b_{12}}, &{}
   t_{2} \leftarrow \mathcolor{\triadseven}{b_{12} - b_{11}}, &{}
   t_{3} \leftarrow \mathcolor{\triadfive}{b_{11} + t_{1}}, &{}
   t_{4} \leftarrow \mathcolor{\triadthree}{b_{21} - t_{3}}.\\
 \end{array}
\)
}
\State{%
  \(
  \renewcommand{\arraycolsep}{.7mm}
  \begin{array}{LLLL}
\mathcolor{\triadone}{p_{1}\leftarrow{a_{11}}{\cdot}{b_{11}}},&
\mathcolor{\triadtwo}{p_{2}\leftarrow{a_{12}}{\cdot}{b_{21}}},&
\mathcolor{\triadthree}{p_{3}\leftarrow{a_{22}}{\cdot}{t_{4}}}, &
\mathcolor{\triadfour}{p_{4}\leftarrow{s_{1}}{\cdot}{t_{1}}},\\
&
\mathcolor{\triadfive}{p_{5}\leftarrow{s_{3}}{\cdot}{t_{3}}}, &
\mathcolor{\triadsix}{p_{6}\leftarrow{s_{4}}{\cdot}{b_{22}}}, &
\mathcolor{\triadseven}{p_{7}\leftarrow{s_{2}}{\cdot}{t_{2}}}.\\
\end{array}
\)
}
\State{%
 \(    \renewcommand{\arraycolsep}{.7mm}
 \begin{array}{LLLL}
    c_{1} \leftarrow \mathcolor{\triadone}{p_{1}} + \mathcolor{\triadfive}{p_{5}}, &
    c_{2} \leftarrow c_{1} + \mathcolor{\triadfour}{p_{4}}, &
    c_{3} \leftarrow \mathcolor{\triadone}{p_{1}} + \mathcolor{\triadtwo}{p_{2}}, &
    c_{4} \leftarrow c_{2} + \mathcolor{\triadthree}{p_{3}}, \\
    c_{5} \leftarrow c_{2} + \mathcolor{\triadseven}{p_{7}}, &
    c_{6} \leftarrow c_{1} + \mathcolor{\triadseven}{p_{7}}, &
    c_{7} \leftarrow c_{6} + \mathcolor{\triadsix}{p_{6}}.&
  \end{array}
 \)
}
\Return {$ C = \begin{smatrix} c_{3} & c_{7} \\ c_{4} & c_{5} \end{smatrix}$.}
\end{algorithmic}
\end{algorithm}
As a second example illustrating our strategy, we consider now the matrix squaring that was already explored by Bodrato in~\cite{Bodrato:2010:square}.
\begin{example}
When computing~$A^{2}$, the contraction~(\ref{eq:TensorAction}) of the
tensor~$\tensor{T}$~(\ref{eq:optimal2x2algoparam})
with~${{A}\tensorproduct{A}}$ shows that choosing a subset~$J$
of~${\{1,\ldots,7\}}$ and imposing~${T_{i1}=T_{i2}}$ as constraints
with~$i$ in~$J$ (see~\cite[eq~4]{Bodrato:2010:square}) can
save~$\card{J}$ operations and thus reduce the computational
complexity.
\par
The definition~(\ref{eq:optimal2x2algoparam}) of\,~$\tensor{T}$, these constraints, and the fact that $U, V$ and~$W$'s determinant are~$1$, form a system with~${3+4\,\card{J}}$ equations and~$12$ unknowns whose solutions define matrix squaring algorithms.
\par
The algorithm~\cite[\S~2.2, eq~2]{Bodrato:2010:square} is given by the action of the isotropy:
\begin{equation}
	\Isotropy{g} =
	\begin{smatrix} 0 & 1 \\ -1 & 0 \end{smatrix} \times
	\begin{smatrix} 1 & 1 \\ 0 & 1 \end{smatrix} \times
	\begin{smatrix} 1 & 0 \\ 1 & 1 \end{smatrix}
\end{equation}
on Strassen's tensor and is just Chatelin's algorithm~\cite[Appendix~A]{Chatelin:1986:transformations}, with~${\lambda=1}$ (published~$25$ years before~\cite{Bodrato:2010:square}, but not applied to squaring).
\end{example}
\begin{remark}
Using symmetries in our strategy reduces the computational cost compared to the resolution of Brent's equations~\cite[\S~5, eq~5.03]{brent:1970a} with an undetermined tensor~$\tensor{T}$.
In the previous example by doing so, we should have constructed a system of at most~$64$ algebraic equations with~${{4(3\, (7-\card{J})+2\,\card{J})}}$ unknowns, resulting from the constraints on~$\tensor{T}$ and the relation~${\tensor{T}=\tensor{S}}$, expressed using Kronecker product as a single zero matrix in~$\K^{\matrixsize{8}{8}}$.
\end{remark}
We apply now our strategy on the~$\matrixsize{2}{2}$ matrix product~$\MatrixProduct{A}{\Transpose{A}}$.
\subsection{\texorpdfstring{${{2}\times{2}}$}{2x2}-matrix product by its transpose}\label{sec:GramMultAlg}
Applying our Strategy~\ref{strategy}, we consider~(\ref{eq:optimal2x2algoparam}) a generic matrix multiplication tensor~${\tensor{T}}$ and our goal is to reduce the computational complexity of the partial contraction~(\ref{eq:TensorAction}) with~${{A}\tensorproduct{\Transpose{A}}}$ computing~$\MatrixProduct{A}{\Transpose{A}}$.
\par
By the properties of the transpose operator and the trace, the following relations hold:
\begin{equation}
\bigl\langle {T_{i2}}, \Transpose{A} \bigr\rangle
\begin{array}[t]{l}
	=\Trace\bigl(\MatrixProduct{\Transpose{T_{i2}}}{\Transpose{A}}\bigr)
	=\Trace\bigl(\Transpose{(\MatrixProduct{A}{T_{i2}})}\bigr),\\
	=\Trace\bigl(\MatrixProduct{A}{T_{i2}}\bigr)
	=\Trace\bigl(\MatrixProduct{T_{i2}}{A}\bigr)
        = \bigl\langle {\Transpose{T_{i2}}}, A \bigr\rangle.
\end{array}
\end{equation}
Thus, the partial contraction~(\ref{eq:TensorAction}) satisfies here the following relation:
\begin{equation}
\label{eq:MainExpression}
\sum_{i=1}^{7} \bigl\langle T_{i1}, A \bigr\rangle \bigl\langle T_{i2}, \Transpose{A} \bigr\rangle T_{i3} =\sum_{i=1}^{7} \bigl\langle T_{i1}, A \bigr\rangle \langle \Transpose{T_{i2}}, A \rangle T_{i3}.
\end{equation}
\subsubsection{Supplementary symmetry constraints}
Our goal is to save computations in the evaluation of~(\ref{eq:MainExpression}).
To do so, we consider the subsets~${J}$ of~${\{1,\ldots,7\}}$ and~${H}$ of~${\left\{{(i,j)}\in{{\{2,\ldots,7\}}^{2}} | i\not =j, i\not\in J, j\not\in J\right\}}$ in order to express the following constraints:
\begin{equation}
	\label{eq:aatconstraints}
	T_{i1}=\Transpose{T_{i2}},\ i\in J,\quad T_{j1}=\Transpose{T_{k2}},\ T_{k1}=\Transpose{T_{j2}}, \ (j,k)\in H.
\end{equation}
The constraints of type~$J$ allow one to save preliminary additions
when applying the method to matrices~${B=\Transpose{A}}$: since then operations on~$A$ and~$\Transpose{A}$ will be the same.
The constraints of type~$H$ allow to save multiplications especially
when dealing with a block-matrix product: in fact, if some matrix
products are transpose of another, only one of the pair needs to
be computed as shown in Section~\ref{sec:GamBlocMatrixProduct}.
\par
We are thus looking for the largest possible sets~$J$ and~$H$.
By exhaustive search, we conclude that the cardinality of~$H$ is at
most~$2$ and then the cardinality of~$J$ is at most~$3$.
For example, choosing the sets~${J=\{1,2,5\}}$ and~${H=\{(3,6),(4,7)\}}$ we obtain for these solutions the following parameterization expressed with a primitive element~${z=v_{11}-v_{21}}$:
\begin{equation}
\label{eq:SolutionsParameterization}
\begin{array}{ccl}
	v_{11}&=&z+v_{21}, \\
	v_{22}&=& \bigl({2\,v_{21}}{({v_{21}}+z)} -1\bigr){v_{21}} +z^{3},\\
	v_{12}&=& -\bigl({v_{21}}^{2}+{({v_{21}}+z^{2})}^{2}+1\bigr){v_{21}}-z,\\
	u_{11} &=& -\bigl({(z+v_{21})}^2+{v_{21}}^{\!2}\bigr) (w_{21}+w_{22}), \\
	u_{21} &=& -\bigl({(z+v_{21})}^2+{v_{21}}^{\!2}\bigr) (w_{11}+w_{12}), \\
	u_{12} &=& -\bigl({(z+v_{21})}^2+{v_{21}}^{\!2}\bigr) w_{22}, \\
	u_{22} &=& \bigl({(z+v_{21})}^2+{v_{21}}^{\!2}\bigr) w_{12},
	\\[\smallskipamount]
	\multicolumn{3}{c}{{\bigl({(z+v_{21})}^{2}+{v_{21}}^{\!2}\bigr)}^{2}+1=0,\ {w_{11}w_{22}-w_{12}w_{21}=1.}}
\end{array}
\end{equation}
\begin{remark}
As~${{\bigl({(z+v_{21})}^{2}+{v_{21}}^{\!2}\bigr)}^{2}+1=0}$ occurs in this parameterization, field extension could not be avoided in these algorithms if the field does not have---at least---a square root of~${-1}$.
We show in Section~\ref{sec:GamBlocMatrixProduct} that we can avoid
these extensions with block-matrix products and use our algorithm
directly in any field of prime characteristic.
\end{remark}
\subsubsection{Supplementary constraint on the number of additions}
As done in Example~\ref{ex:bshouty}, we could also try to reduce the additive complexity and use~$4$ pre-additions on~$A$ (resp.~$B$)~\cite[Lemma~9]{bshouty:1995a} and~$7$ post-additions on the products to form~$C$~\cite[Lemma~2]{bshouty:1995a}.
In the current situation, if the operations on~$B$ are exactly the transpose of that of~$A$, then we have the following lower bound:
\begin{lemma}\label{lem:eleven}
Over a non-commutative domain,~$11$ additive operations are necessary to multiply a~$\matrixsize{2}{2}$ matrix by its transpose with a bilinear algorithm that uses~$7$ multiplications.
\end{lemma}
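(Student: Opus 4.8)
The plan is to split the additive operations of any such algorithm into three groups and bound each separately, using the transpose hypothesis to make one group vanish. As in Section~2, a bilinear algorithm computing~$\MatrixProduct{A}{\Transpose{A}}$ with~$7$ products is encoded by an optimal~$\matrixsize{2}{2}$ product tensor~${\tensor{T}=\sum_{i=1}^{7} T_{i1}\tensorproduct T_{i2}\tensorproduct T_{i3}}$ specialized at~${B=\Transpose{A}}$. Its additions are then of exactly three kinds: those forming the seven left forms~${\langle T_{i1},A\rangle}$ (pre-additions on~$A$), those forming the seven right forms~${\langle T_{i2},\Transpose{A}\rangle}$ (pre-additions on~$B$), and those combining the seven products into the entries of the result (post-additions).

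First I would apply Bshouty's bounds, available because~$\tensor{T}$ is an optimal tensor: its first component costs at least~$4$ additions~(\cite[Lemma~9]{bshouty:1995a}), and the post-additions cost at least~$7$~(\cite[Lemma~2]{bshouty:1995a}). Next I would invoke the hypothesis that the operations on~$B$ mirror those on~$A$: by the relation~${\langle T_{i2},\Transpose{A}\rangle=\langle \Transpose{T_{i2}},A\rangle}$ recalled just before the statement, together with the constraints~(\ref{eq:aatconstraints}) of type~$J$ and~$H$, every right form is the transpose of an already-formed left block and so costs no new addition. This removes the middle group entirely, and adding up the survivors yields at least~${4+0+7=11}$ additive operations.

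The step I expect to be the genuine obstacle is the post-additions. Because~$\MatrixProduct{A}{\Transpose{A}}$ is symmetric---its diagonal blocks are themselves symmetric and its off-diagonal blocks are mutual transposes---one might hope to assemble only three independent output blocks and recover the fourth by a free transposition, thereby spending fewer than the~$7$ post-additions that~\cite[Lemma~2]{bshouty:1995a} guarantees for a \emph{general} product. The crux is thus to prove that this shortcut eliminates none of the counted post-additions, i.e.\ that forming the three independent blocks of the result from the seven products still requires at least~$7$ additions. I would settle this either by adapting the counting in the proof of~\cite[Lemma~2]{bshouty:1995a} to a symmetric output, or---since the admissible patterns~$J$ and~$H$ have already been cut down to a finite list---by an exhaustive check over those patterns showing that the transpose symmetry removes no post-addition.
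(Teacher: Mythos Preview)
Your decomposition into pre-$A$, pre-$B$, and post additions with Bshouty's bounds of~$4$ and~$7$ is exactly the paper's route. However, the ``genuine obstacle'' you raise in the last paragraph is a misreading of the hypothesis, and the paper's whole point is that it does not arise. The lemma is stated \emph{over a non-commutative domain}; there the $\matrixsize{2}{2}$ product~$\MatrixProduct{A}{\Transpose{A}}$ is \emph{not} symmetric: its off-diagonal entries are~$a_{11}a_{21}+a_{12}a_{22}$ and~$a_{21}a_{11}+a_{22}a_{12}$, which are distinct in general (this is precisely the paper's ``${ac+bd}\neq{ca+db}$'' remark right after the lemma). Hence all four output coefficients are independent and must actually be produced, so Bshouty's lower bound of~$7$ post-additions for the third component applies verbatim---no adaptation of~\cite[Lemma~2]{bshouty:1995a}, no exhaustive check over patterns. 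You appear to be importing the symmetry from the block-matrix instantiation (where one also transposes the \emph{entries}), but that extra involution is not part of the abstract non-commutative domain in the statement.

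A smaller wrinkle: invoking the constraints~(\ref{eq:aatconstraints}) to make the pre-$B$ group vanish points the wrong way for a \emph{lower} bound. Those constraints single out a special family of tensors; an arbitrary $7$-multiplication algorithm need not satisfy them, and then the right forms are not reused from the left ones. For the lower bound you only need that the pre-$B$ cost is~$\geq 0$, which is trivial, and then~$4+0+7=11$. The constraints of type~$J$ and~$H$ are what let one \emph{achieve}~$11$, not what forces it.
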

Indeed, over a commutative domain, the lower left and upper right parts of the product are transpose of one another and one can save also multiplications.
Differently, over non-commutative domains,~$\MatrixProduct{A}{\Transpose{A}}$ is not symmetric in general (say~${{ac+bd}\neq{ca+db}}$) and all four coefficients need to be computed.
But one can still save~$4$ additions, since there are algorithms where
pre-additions are the same on~$A$ and~$\Transpose{A}$.
Now, to reach that minimum, the constraints~(\ref{eq:aatconstraints}) must be combined with the minimal number~$4$ of pre-additions for~$A$.
Those can be attained only if~$3$ of the~$T_{i1}$ factors do not require any addition~\cite[Lemma~8]{bshouty:1995a}.
Hence, those factors involve only one of the four elements of~$A$ and they are just permutations of~$e_{11}$.
We thus add these constraints to the system for a subset~$K$
of~${\{1,\ldots,7\}}$:
\begin{equation}
	\label{eq:addconstraints}
	\card{K}=3~\text{and}~T_{i1}\ \textrm{is in}\ \left\{
	\begin{smatrix} 1&0\\0&0\end{smatrix},
	\begin{smatrix} 0&1\\0&0\end{smatrix},
	\begin{smatrix} 0&0\\1&0\end{smatrix},
	\begin{smatrix} 0&0\\0&1\end{smatrix}
		\right\}\ \textrm{and}\ i\ \textrm{in}\ K.
\end{equation}
\subsubsection{Selected solution}
We choose~${K=\{1,2,3\}}$ similar to~(\ref{eq:WinogradConstraints}) and obtain the following isotropy that sends Strassen tensor to an algorithm computing the symmetric product more efficiently:
\begin{equation}
\label{eq:ChoosenAAT}
	\Isotropy{a}=
	\begin{smatrix}z^{2}&0\\0&z^{2}\end{smatrix} \times
	\begin{smatrix}z&-z\\0&z^{3}\end{smatrix} \times
	\begin{smatrix}-1&1\\1&0\end{smatrix},\quad z^{4}=-1.
\end{equation}
We remark that~${\Isotropy{a}}$ is equal to~${\Isotropy{d}\IsotropyComposition\Isotropy{w}}$ with~$\Isotropy{w}$ the isotropy~(\ref{eq:IsotropyFromStrassen2Wingrad}) that sends Strassen tensor to Winograd tensor and with:
\begin{equation}
	\Isotropy{d}={D_{1}}\tensorproduct{D_{2}}\tensorproduct{D_{3}}=
	\begin{smatrix}z^{2}&0\\0&z^{2}\end{smatrix} \times
	\begin{smatrix}z&0\\0&-z^{3}\end{smatrix} \times
	\begin{smatrix}1&0\\0&1\end{smatrix},\ z^{4}=-1.
\end{equation}
Hence, the induced algorithm can benefit from the scheduling and additive complexity of the classical Winograd algorithm.
In fact, our choice~${\IsotropyAction{\Isotropy{a}}{\tensor{S}}}$ is equal to~${\IsotropyAction{(\Isotropy{d}\IsotropyComposition\Isotropy{w})}{\tensor{S}}}$ and thus, according to remark~(\ref{rem:groupcomposition}) the resulting algorithm expressed as the total contraction
\begin{equation}
	\Contraction{\IsotropyAction{(\Isotropy{d}\IsotropyComposition\Isotropy{w})}{\tensor{S}}}{({A}\tensorproduct{\Transpose{A}}\tensorproduct{C})}=
	\Contraction{\IsotropyAction{\Isotropy{w}}{\tensor{S}}}{\IsotropyAction{\Transpose{d}}{({A}\tensorproduct{\Transpose{A}}\tensorproduct{C})}}
\end{equation}
could be written as a slight modification of Algorithm~\ref{alg:sw} inputs.
\par
Precisely, as~$\Isotropy{d}$'s components are diagonal, the relation~${\Transpose{\Isotropy{d}}=\Isotropy{d}}$ holds; hence, we could express input modification as:
\begin{equation}
\label{eq:FromGB2ModifiedWinograd}
{\left(\MatrixProduct{\Inverse{D_{1}}}{\MatrixProduct{A}{D_{2}}} \right)}
\tensorproduct
{\left(\MatrixProduct{\Inverse{D_{2}}}{\MatrixProduct{\Transpose{A}}{D_{3}}}\right)}
\tensorproduct
{\left(\MatrixProduct{\Inverse{D_{3}}}{\MatrixProduct{C}{D_{1}}} \right)}.
\end{equation}
The above expression is trilinear and the matrices~$D_{i}$ are scalings of the identity for~$i$ in~${\{1,3\}}$, hence our modifications are just:
\begin{equation}
\label{eq:InputsFromGB2ModifiedWinograd}
{\left({\frac{1}{z^{2}}\MatrixProduct{A}{D_{2}}} \right)}
\tensorproduct
{\left(\MatrixProduct{\Inverse{D_{2}}}{\Transpose{A}}\right)}
\tensorproduct
{{z^{2}}{C}}.
\end{equation}
Using notations of Algorithm~\ref{alg:sw}, this is~${C=\mathrm{W}\bigl(\MatrixProduct{A}{D_{2}},\MatrixProduct{\Inverse{D_{2}}}{\Transpose{A}}\bigr)}$.
\par
Allowing our isotropies to have determinant different from~$1$, we rescale~$D_{2}$ by a factor~$1/z$ to avoid useless~$4$th root as follows:
\begin{equation}
\label{eq:Y}
Q=\frac{D_{2}}{z} =\begin{smatrix}1&0\\0&-z^{2}\end{smatrix} =\begin{smatrix}1&0\\0&-y\end{smatrix} \!,\quad z^{4}=-1
\end{equation}
where~$y$ designates the expression~${z^{2}}$ that is a root
of~$-1$.
Hence, our algorithm to compute the symmetric product is:
\begin{equation}
\label{eq:2x2GramAlgo}
	C=\mathrm{W}\left(\MatrixProduct{A}{\frac{D_{2}}{z}},\MatrixProduct{\Inverse{\left(\frac{D_{2}}{z}\right)}}{\Transpose{A}}\right)=\mathrm{W}\!\left(\MatrixProduct{A}{Q},\Transpose{\left(\MatrixProduct{A}{\Transpose{(\Inverse{Q})}}\right)}\right)\!.
\end{equation}
In the next sections, we describe and extend this algorithm to higher-dimensional symmetric products~${\MatrixProduct{A}{\Transpose{A}}}$ with a~$\matrixsize{2^{\ell}m}{2^{\ell}m}$ matrix~$A$.
\section{Fast~\texorpdfstring{$\matrixsize{2}{2}$}{2x2}-block recursive \syrk}\label{sec:GamBlocMatrixProduct}
The algorithm presented in the previous section is non-commutative and thus we can extend it to higher-dimensional matrix product by a divide and conquer approach.
To do so, we use in the sequel upper case letters for coefficients in
our algorithms instead of lower case previously (since these
coefficients now represent matrices).
Thus, new properties and results are induced by this shift of perspective.
For example, the coefficient~$Y$ introduced in~(\ref{eq:Y}) could now be transposed in~(\ref{eq:2x2GramAlgo}); that leads to the following definition:
\begin{definition}
An invertible matrix is \emph{skew-orthogonal} if the following relation~${\Transpose{Y}= -\Inverse{Y}}$ holds.
\end{definition}
If~$Y$ is skew-orthogonal, then of the~$7$ recursive matrix products involved in expression~(\ref{eq:2x2GramAlgo}):~$1$ can be avoided ($P_{6}$) since we do not need the upper right coefficient anymore,~$1$ can be avoided since it is the transposition of another product (${P_{7}=\Transpose{P_{4}}}$) and~$3$ are recursive calls to {\syrk}.
This results in Algorithm~\ref{alg:wishartp}.

\begin{algorithm}[htbp]\caption{\texttt{syrk}: symmetric
    matrix product
  }\label{alg:wishartp}
\begin{spacing}{1}
\begin{algorithmic}
\Require{$A=\begin{smatrix} A_{11}&A_{12}\\ A_{21}&A_{22}\end{smatrix}$; a skew-orthogonal matrix~$Y$.}
\Ensure{The lower left triangular part of~$C=\MatrixProduct{A}{\Transpose{A}}=\begin{smatrix}
  C_{11}& \Transpose{C_{21}}\\ {C_{21}} &C_{22}\end{smatrix}$.}

\State \Comment{4 additions and 2 multiplications by~$Y$:}
\State \({S_{1}}\leftarrow{\MatrixProduct{(A_{21} - A_{11})}{Y}}, \hspace{.8em} {S_{2}}\leftarrow{A_{22} - \MatrixProduct{A_{21}}{Y}}\),
\State \({S_{3}}\leftarrow{S_{1} - A_{22}},\hspace{3.4em} {S_{4}}\leftarrow{S_{3} + A_{12}}\).
\State \Comment{3 recursive {\syrk} (${P_{1}, P_{2}, P_{5}}$) and~$2$ generic (${P_{3}, P_{4}}$) products:}
\State \(\mathcolor{\triadone}{P_{1} \leftarrow\MatrixProduct{A_{11}}{\Transpose{A_{11}}}}, \hspace{2.5em}
\mathcolor{\triadtwo}{P_{2} \leftarrow\MatrixProduct{A_{12} }{\Transpose{A_{12}}}}\),
\State \(\mathcolor{\triadthree}{P_{3} \leftarrow\MatrixProduct{A_{22}}{\Transpose{S_{4}}}}, \hspace{3em}
\mathcolor{\triadfour}{P_{4} \leftarrow\MatrixProduct{S_{1}}{\Transpose{S_{2}}}}, \hspace{3em}
\mathcolor{\triadfive}{P_{5} \leftarrow\MatrixProduct{S_{3}}{\Transpose{S_{3}}}}.\)
\State \Comment{2 symmetric additions (half additions):}
\State \(\text{Low}(U_{1})\!\leftarrow\!\text{Low}(P_{1})\!+\!\text{Low}(P_{5})\), \Comment{$U_{1}, P_{1}, P_{5}$~\text{are symm.}}
\State \(\text{Low}(U_{3})\!\leftarrow\!\text{Low}(P_{1})\!+\!\text{Low}(P_{2})\), \Comment{$U_{3}, P_{1}, P_{2}$~\text{are symm.}}

\State \Comment{2 complete additions ($P_{4}$ and~$P_{3}$ are not symmetric):}
\State \(\text{Up}(U_{1})\leftarrow\Transpose{\text{Low}(U_{1})},
\hspace{15pt}
U_{2} \leftarrow U_{1} + P_{4},
\hspace{15pt}
U_{4} \leftarrow U_{2} + P_{3},\)
\State \Comment{1 half addition ($U_{5}=U_{1}+P_{4}+\Transpose{P_{4}}$ is symmetric):}
\State \(\text{Low}(U_{5}) \leftarrow \text{Low}(U_{2}) + \text{Low}(\Transpose{P_{4}}).\)

\State \Return{$\begin{smatrix} \text{Low}(U_{3}) &\\ U_{4} & \text{Low}(U_{5}) \end{smatrix}$.}
\end{algorithmic}
\end{spacing}
\end{algorithm}

\begin{restatable}[Appendix~\ref{thm:proof}]{proposition}{wishartp}\label{thm:wishartp}
Algorithm~\ref{alg:wishartp} is correct for any skew-orthogonal matrix~$Y$.
\end{restatable}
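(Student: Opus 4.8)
The plan is to check directly that the three returned blocks equal the corresponding blocks of $\MatrixProduct{A}{\Transpose{A}}$, namely that $\text{Low}(U_{3})$, $U_{4}$ and $\text{Low}(U_{5})$ compute $C_{11}=\MatrixProduct{A_{11}}{\Transpose{A_{11}}}+\MatrixProduct{A_{12}}{\Transpose{A_{12}}}$, $C_{21}=\MatrixProduct{A_{21}}{\Transpose{A_{11}}}+\MatrixProduct{A_{22}}{\Transpose{A_{12}}}$ and $C_{22}=\MatrixProduct{A_{21}}{\Transpose{A_{21}}}+\MatrixProduct{A_{22}}{\Transpose{A_{22}}}$. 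The first identity $C_{11}=U_{3}=P_{1}+P_{2}$ is immediate from the definitions of $P_{1}$ and $P_{2}$. Before touching the other two, I would isolate the one algebraic fact on which the whole argument rests: the skew-orthogonality hypothesis $\Transpose{Y}=-\Inverse{Y}$ is equivalent to $\MatrixProduct{Y}{\Transpose{Y}}=-\IdentityMatrix$, and this is the sole property of $Y$ that will be invoked.

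For $U_{4}=P_{1}+P_{5}+P_{4}+P_{3}$ I would substitute $S_{1}=\MatrixProduct{(A_{21}-A_{11})}{Y}$, $S_{2}=A_{22}-\MatrixProduct{A_{21}}{Y}$, $S_{3}=S_{1}-A_{22}$, $S_{4}=S_{3}+A_{12}$ and expand each product. Every term carrying the factor $\MatrixProduct{Y}{\Transpose{Y}}$ (one in $P_{5}$, one in $P_{4}$) is replaced by $-\IdentityMatrix$; the two single-$Y$ cross terms of $P_{5}=\MatrixProduct{S_{3}}{\Transpose{S_{3}}}$ cancel exactly against the matching $Y$-terms of $P_{4}=\MatrixProduct{S_{1}}{\Transpose{S_{2}}}$ and $P_{3}=\MatrixProduct{A_{22}}{\Transpose{S_{4}}}$; and the two copies of $\MatrixProduct{A_{22}}{\Transpose{A_{22}}}$ cancel between $P_{5}$ and $P_{3}$. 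What survives telescopes, via $-\MatrixProduct{(A_{21}-A_{11})}{\Transpose{(A_{21}-A_{11})}}+\MatrixProduct{(A_{21}-A_{11})}{\Transpose{A_{21}}}=\MatrixProduct{(A_{21}-A_{11})}{\Transpose{A_{11}}}$, and combines with $P_{1}$ and the leftover $\MatrixProduct{A_{22}}{\Transpose{A_{12}}}$ to give exactly $C_{21}$.

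The block $U_{5}=P_{1}+P_{5}+P_{4}+\Transpose{P_{4}}$ is handled the same way, now using $\Transpose{P_{4}}=\MatrixProduct{S_{2}}{\Transpose{S_{1}}}$: again $\MatrixProduct{Y}{\Transpose{Y}}=-\IdentityMatrix$ kills the quadratic-in-$Y$ factor and the remaining single-$Y$ terms of $P_{5}$ cancel against those of $P_{4}$ and $\Transpose{P_{4}}$. Expanding the $A_{21}-A_{11}$ brackets, the surviving terms regroup as $\MatrixProduct{A_{21}}{\Transpose{A_{21}}}+\MatrixProduct{A_{22}}{\Transpose{A_{22}}}$, the stray $\MatrixProduct{A_{11}}{\Transpose{A_{11}}}$ produced by the expansion being absorbed by $P_{1}$, so $U_{5}=C_{22}$. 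To close, I would observe that $P_{1}$, $P_{2}$, $P_{5}$ and the intermediates $U_{1}$, $U_{5}$ are symmetric, which legitimizes computing only their lower halves and the reconstruction $\text{Up}(U_{1})\leftarrow\Transpose{\text{Low}(U_{1})}$, and that $\Transpose{P_{4}}$ costs nothing once $P_{4}$ is known.

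I expect the only real obstacle to be bookkeeping: arranging the expansion of $P_{3}$, $P_{4}$, $P_{5}$ and $\Transpose{P_{4}}$ so that each $Y$-bearing term either acquires a $\MatrixProduct{Y}{\Transpose{Y}}$ factor or falls into a cancelling pair. Once that is done, no commutativity is ever used, so the identities hold over an arbitrary (possibly non-commutative) ring and the scheme recurses correctly on block matrices, which is precisely what the statement asserts.
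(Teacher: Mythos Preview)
Your proposal is correct and follows essentially the same approach as the paper: a direct verification that $U_{3}=C_{11}$, $U_{4}=C_{21}$, $U_{5}=C_{22}$ by expanding the $S_{i}$ and $P_{i}$ and invoking only $\MatrixProduct{Y}{\Transpose{Y}}=-\IdentityMatrix{}$. The only difference is organizational: the paper introduces two auxiliary quantities $R_{1}=\MatrixProduct{A_{11}}{\MatrixProduct{Y}{\Transpose{S_{2}}}}$ and $R_{2}=\MatrixProduct{A_{21}}{\MatrixProduct{Y}{\Transpose{A_{22}}}}$, exploits the rewriting $S_{3}=-S_{2}-\MatrixProduct{A_{11}}{Y}$, and derives $U_{4}$ from $U_{5}$ via $U_{4}=U_{5}-\Transpose{P_{4}}+P_{3}$, whereas you expand $U_{4}$ and $U_{5}$ independently and track cancellations by brute force---both routes are equally valid and of comparable length.
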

\subsection{Skew orthogonal matrices}\label{ssec:skeworthmat}
Algorithm~\ref{alg:wishartp} requires a skew-orthogonal matrix.
Unfortunately there are no skew-orthogonal matrices over~$\RR$, nor~$\QQ$.
Hence, we report no improvement in these cases.
In other domains, the simplest skew-orthogonal matrices just use a square root of~$-1$.
\subsubsection{Over the complex field}
Therefore Algorithm~\ref{alg:wishartp} is directly usable over~$\CC^{n{\times}n}$ with~${Y=i\,\IdentityMatrixOfSize{n}\in\CC^{n{\times}n}}$.
Further, usually, complex numbers are emulated by a pair of floats so then the multiplications by~${Y=i\,\IdentityMatrixOfSize{n}}$ are essentially free since they just exchange the real and imaginary parts, with one sign flipping.
Even though over the complex the product \textsc{zherk} of a matrix by its \emph{conjugate} transpose is more widely used, \textsc{zsyrk} has some applications, see for instance~\cite{Baboulin:2005:csyrk}.
\subsubsection{Negative one is a square}
Over some fields with prime characteristic, square roots of~$-1$ can be elements of the base field, denoted~$i$ in~$\F$ again.
There, Algorithm~\ref{alg:wishartp} only requires some
pre-multiplications by this square root (with
also~${Y=i\,\IdentityMatrixOfSize{n}\in\F^{n{\times}n}}$), but
\emph{within the field}.
\cref{lem:lemsqrt} thereafter characterizes these fields.
\begin{proposition}\label{lem:lemsqrt}
Fields with characteristic two, or with an odd
characteristic~${{p}\equiv{1}\bmod{4}}$, or finite fields that are an even extension, contain a square root of~$-1$.
\end{proposition}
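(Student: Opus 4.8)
The plan is to handle the three asserted cases separately, since each corresponds to a different structural reason why $-1$ becomes a square in the field. Throughout I would denote by $\F$ the field in question and reduce everything to the question of when $-1$ is a square, i.e.\ when the polynomial $x^{2}+1$ has a root in $\F$.

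First I would dispose of the characteristic two case: there $-1=1$, and $1=1^{2}$ is trivially a square, so any field of characteristic two contains a square root of $-1$ (namely $1$ itself). This case requires essentially no work.

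Next I would treat a finite field $\F_{q}$ with odd characteristic. The key tool is that the multiplicative group $\F_{q}^{\times}$ is cyclic of even order $q-1$, so an element is a square exactly when it is a $(q-1)/2$-th power root of unity equal to $1$; concretely, a nonzero $a$ is a square if and only if $a^{(q-1)/2}=1$ (Euler's criterion). Applying this to $a=-1$, the element $-1$ is a square precisely when $(-1)^{(q-1)/2}=1$, i.e.\ when $(q-1)/2$ is even, that is $q\equiv 1\bmod 4$. I would then observe that the two remaining hypotheses both force $q\equiv 1\bmod 4$: if the prime characteristic satisfies $p\equiv 1\bmod 4$ then already the prime field $\F_{p}$ has $q=p\equiv 1\bmod 4$; and if $\F_{q}$ is an even extension of its prime field, say $q=p^{2t}$, then $q=(p^{t})^{2}$ is a perfect square, so $q\equiv 1\bmod 4$ whenever $q$ is odd (an odd square is always $\equiv 1\bmod 8$, a fortiori $\equiv 1\bmod 4$). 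In either situation Euler's criterion yields a square root of $-1$ inside $\F_{q}$.

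I do not expect any single step to be a genuine obstacle, since this is a classical number-theoretic fact; the only point requiring care is making the case distinction exhaustive and non-overlapping, and in particular verifying that the ``even extension'' hypothesis is what guarantees $q$ is a perfect square so that $q\equiv 1\bmod 4$ regardless of the residue of $p$ modulo $4$. The cleanest exposition collects the odd-characteristic subcases into the single criterion $q\equiv 1\bmod 4$ and checks that each stated hypothesis implies it, thereby covering all three cases at once.
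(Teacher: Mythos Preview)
Your proposal is correct and follows essentially the same approach as the paper: both arguments reduce to the cyclicity of the multiplicative group of a finite field and the fact that $4$ divides its order under the stated hypotheses. The only cosmetic difference is that the paper exhibits an explicit fourth root of unity (namely $x^{(q-1)/4}$ for a suitable $x$) in each case, whereas you invoke Euler's criterion and unify the two odd-characteristic subcases into the single condition $q\equiv 1\bmod 4$.
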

\begin{proof}
  If~${p=2}$, then~${1=1^{2}=-1}$.
  If~${{{p}\equiv{1}}\bmod{4}}$, then half of the non-zero elements~$x$ in the base field of size~$p$ satisfy~${x^{\frac{p-1}{4}} \neq \pm 1}$ and then the square of the latter must be~$-1$.
  If the finite field~$\F$ is of cardinality~$p^{2k}$, then, similarly, there exists elements~${x^{\frac{p^{k}-1}{2}\frac{p^{k}+1}{2}}}$ different from~$\pm 1$ and then the square of the latter must be~$-1$.
\end{proof}
\subsubsection{Any field with prime characteristic}
Finally, we show that Algorithm~\ref{alg:wishartp} can also be run without any field extension, even when~$-1$ is not a square:
form the skew-orthogonal matrices constructed in
\cref{lem:pigeonhole}, thereafter, and use them directly as
long as the dimension of~$Y$ is even.
Whenever this dimension is odd, it is always possible to pad with zeroes so that~${\MatrixProduct{A}{\Transpose{A}}=\MatrixProduct{\begin{smatrix}A&0\end{smatrix}}{\begin{smatrix} \Transpose{A} \\ 0\end{smatrix}}}$.
\begin{proposition}\label{lem:pigeonhole}
Let~$\F$ be a field of characteristic~$p$, there exists~${(a,b)}$ in~${\F^{2}}$ such that the matrix:
\begin{equation}
\begin{smatrix}
a & b\\
-b & a
\end{smatrix}\tensorproduct{\IdentityMatrixOfSize{n}} =
\begin{smatrix}
a\, \IdentityMatrixOfSize{n} & b\, \IdentityMatrixOfSize{n}\\
-b\, \IdentityMatrixOfSize{n} & a\, \IdentityMatrixOfSize{n}
\end{smatrix}\quad \textrm{in}\quad \F^{2n{\times}2n}
\end{equation}
is skew-orthogonal.
\end{proposition}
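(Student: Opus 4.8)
The plan is to reduce the skew-orthogonality condition $\Transpose{Y}=-\Inverse{Y}$ for $Y=\begin{smatrix} a & b\\ -b & a\end{smatrix}\tensorproduct{\IdentityMatrixOfSize{n}}$ to a single scalar equation in $a,b\in\F$, and then to establish the existence of a solution by a counting (pigeonhole) argument, as the name of the proposition suggests. First I would observe that for the $\matrixsize{2}{2}$ block $M=\begin{smatrix} a & b\\ -b & a\end{smatrix}$ one has $\Transpose{M}=\begin{smatrix} a & -b\\ b & a\end{smatrix}$ and $\MatrixProduct{M}{\Transpose{M}}=(a^{2}+b^{2})\IdentityMatrixOfSize{2}$. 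Since the Kronecker factor $\IdentityMatrixOfSize{n}$ is symmetric and orthogonal, the condition $\Transpose{Y}=-\Inverse{Y}$, equivalently $\MatrixProduct{Y}{\Transpose{Y}}=-\IdentityMatrixOfSize{2n}$, collapses to the scalar requirement
\begin{equation}
\label{eq:sumofsquares}
a^{2}+b^{2}=-1\quad\text{in}\quad \F.
\end{equation}
So the proposition is equivalent to showing that $-1$ is a sum of two squares in every field of prime characteristic.

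Next I would dispose of the easy cases and then attack the main one. In characteristic $p=2$ we may take $a=1,b=0$, since $-1=1=1^{2}+0^{2}$. When $-1$ is itself a square, say $-1=i^{2}$ (the fields characterized in \cref{lem:lemsqrt}), we take $a=i,b=0$. The remaining case is an odd prime $p$ with $-1$ \emph{not} a square in the prime field $\F_{p}$; here the heart of the argument is the pigeonhole count. Consider the two sets
\begin{equation}
S=\{\,a^{2} : a\in\F_{p}\,\},\qquad T=\{\,-1-b^{2} : b\in\F_{p}\,\}.
\end{equation}
Because squaring is two-to-one on the nonzero elements, each of $S$ and $T$ has exactly $\tfrac{p+1}{2}$ elements. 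Since $\card{S}+\card{T}=p+1>p=\card{\F_{p}}$, the two sets cannot be disjoint, so there exist $a,b$ with $a^{2}=-1-b^{2}$, i.e.\ a solution of \eqref{eq:sumofsquares}. As $\F_{p}\subseteq\F$, the same $(a,b)$ works over $\F$, which proves the claim in all characteristics.

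The step I expect to require the most care is the cardinality count for $S$ and $T$: one must correctly account for the element $0$ (whose square is $0$ and which is not doubled) to arrive at exactly $\tfrac{p+1}{2}$ rather than $\tfrac{p-1}{2}$, and then verify that the strict inequality $\card{S}+\card{T}>p$ forces a nonempty intersection. Everything else — the reduction \eqref{eq:sumofsquares}, the behaviour of the Kronecker product, and the two degenerate cases — is routine linear algebra. A final remark worth including is that the resulting $Y$ has even dimension $2n$ by construction, which matches the parity hypothesis under which Algorithm~\ref{alg:wishartp} invokes such a matrix, with the odd-dimensional case handled by zero-padding as described above.
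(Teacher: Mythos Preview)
Your proposal is correct and is essentially the paper's own argument: reduce $\MatrixProduct{Y}{\Transpose{Y}}=-\IdentityMatrixOfSize{2n}$ to the scalar equation $a^{2}+b^{2}=-1$, settle characteristic~$2$ with $(a,b)=(1,0)$, and for odd~$p$ intersect the $\tfrac{p+1}{2}$ squares with the $\tfrac{p+1}{2}$ values $-1-x^{2}$ by pigeonhole (the paper cites \cite[Lemma~6]{Seroussi:1980:BBgfp} for this step). Your separate treatment of the case where $-1$ is already a square is harmless but redundant, since the pigeonhole count covers all odd~$p$ uniformly.
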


\begin{proof}
Using the relation
\begin{equation}
\begin{smatrix}
a \,\IdentityMatrixOfSize{n} & b \,\IdentityMatrixOfSize{n}\\
-b \,\IdentityMatrixOfSize{n} & a \,\IdentityMatrixOfSize{n}
\end{smatrix}
\Transpose{%
\begin{smatrix}
a \,\IdentityMatrixOfSize{n} & b \,\IdentityMatrixOfSize{n}\\
-b \,\IdentityMatrixOfSize{n} & a \,\IdentityMatrixOfSize{n}
\end{smatrix}}=
(a^2+b^2)\,\IdentityMatrixOfSize{2n},
\end{equation}
it suffices to prove that there exist~$a,b$ such that~${a^{2}+b^{2}=-1}$.
In characteristic~2,~${{a=1},{b=0}}$ is a solution as~${1^{2}+0^{2}=-1}$.
In odd characteristic, there are~${\frac{p+1}{2}}$ distinct square elements~${x_{i}}^{2}$ in the base prime field.
Therefore, there are~$\frac{p+1}{2}$ distinct elements~${-1-{x_{i}}^{2}}$.
But there are only~$p$ distinct elements in the base field, thus there exists a couple~$(i,j)$ such
that~${-1-{x_{i}}^{2}}={x_{j}}^{2}$~\cite[Lemma~6]{Seroussi:1980:BBgfp}.
\end{proof}
\cref{lem:pigeonhole} shows that skew-orthogonal matrices
do exist for any field with prime characteristic.
For Algorithm~\ref{alg:wishartp}, we need to build them mostly for~${{p}\equiv{3}\bmod 4}$ (otherwise use \cref{lem:lemsqrt}).
\par
For this, without the extended Riemann hypothesis (\textsc{erh}), it is possible to use the decomposition of primes into squares:
\begin{enumerate}
\item Compute first a prime~${r=4pk+(3-1)p-1}$, then the relations~${{r}\equiv{1}\bmod{4}}$ and~${r\equiv{-1}\bmod{p}}$ hold;
\item Thus, results of~\cite{brillhart:1972:twosquares} allow one to decompose primes into squares and give a couple~${(a,b)}$ in~${\Z^{2}}$ such that~${a^2+b^2=r}$.
Finally, we get~${{{a^{2}+b^{2}}\equiv{-1}}\bmod{p}}$.
\end{enumerate}
By the prime number theorem the first step is polynomial in~$\log(p)$,
as is the second step (square root modulo a prime, denoted \FFSqrtname, has a cost close to exponentiation and then the rest of Brillhart's algorithm is \textsc{gcd}-like).
In practice, though, it is faster to use the following
Algorithm~\ref{alg:sosmodp}, even though the latter has a better
asymptotic complexity bound only if the \textsc{erh} is true.
\begin{algorithm}[htbp]\caption{\texttt{SoS}: Sum of squares decomposition over a finite field}\label{alg:sosmodp}
\begin{algorithmic}[1]
 \Require{${p\in\Primes\backslash\{2\}}$,~${k\in\Z}$.}
 \Ensure{${(a,b)\in\Z^{2}}$, s.t.~${a^{2}+b^{2}\equiv{k}\bmod{p}}$.}
 \If{$\left(\frac{k}{p}\right)=1$}
    \Comment{$k$ is a square mod~$p$}
    \State \Return{$\left(\FFSqrt{p}{k},0\right)$.}
 \Else \Comment{Find smallest quadratic non-residue}
 \State $s\assign 2$;
 \InlineWhile{$\left(\frac{s}{p}\right)==1$}{$s\assign s+1$}
 \EndIf
 \State ${c \assign \FFSqrt{p}{s-1}}$ \hfill\Comment{${s-1}$ must be a square}
 \State $r \assign k s^{-1} \bmod{p}$
 \State ${a \assign \FFSqrt{p}{r}}$ \Comment{Now~${{k}\equiv{a^{2}s}\equiv{a^{2}(1+c^{2})}\bmod{p}}$}
 \State \Return $\left(a, ac\bmod{p}\right)$
\end{algorithmic}
\end{algorithm}

\begin{proposition}\label{thm:sosmodpcorrect}
Algorithm~\ref{alg:sosmodp} is correct and, under the~\textsc{erh}, runs in expected time~${\widetilde{O}\bigl({\log}^{3}(p)\bigr)}$.
\end{proposition}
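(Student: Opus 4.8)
The plan is to establish correctness and the conditional running-time bound separately. For correctness, I would trace the two branches of the algorithm. If~$k$ is a quadratic residue modulo~$p$, as detected by the Legendre symbol~${\left(\frac{k}{p}\right)=1}$, then~$\FFSqrt{p}{k}$ returns an~$a$ with~${a^2\equiv k\bmod p}$, and the pair~${(a,0)}$ trivially satisfies~${a^2+0^2\equiv k}$; this branch is immediate. The substance is in the \texttt{else} branch. Here I would argue that the \textbf{while} loop terminates with~$s$ equal to the smallest quadratic non-residue, so that~${s-1}$ is a quadratic residue (being either~$1$ or smaller than the smallest non-residue), which justifies the computation~${c\assign\FFSqrt{p}{s-1}}$ and gives~${1+c^2\equiv s\bmod p}$. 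Then~${r\assign ks^{-1}}$ is a quotient of two non-residues, hence itself a residue, so~$\FFSqrt{p}{r}$ succeeds and yields~$a$ with~${a^2\equiv r\equiv ks^{-1}}$. I would then verify the stated invariant~${k\equiv a^2 s\equiv a^2(1+c^2)\bmod p}$ and expand to conclude~${a^2+(ac)^2=a^2(1+c^2)\equiv k\bmod p}$, which is exactly the required output relation.

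The one genuinely delicate point in correctness is the claim that~$r$ is a quadratic residue. I would handle this through the multiplicativity of the Legendre symbol: since~$s$ is a non-residue and~$k$ is a non-residue (we are in the \texttt{else} branch, so~${\left(\frac{k}{p}\right)=-1}$), their ratio~${r=ks^{-1}}$ satisfies~${\left(\frac{r}{p}\right)=\left(\frac{k}{p}\right)\left(\frac{s}{p}\right)^{-1}=(-1)(-1)=1}$, so~$r$ is indeed a residue and the square root exists. The parallel fact that~${s-1}$ is a residue follows from minimality of~$s$: every positive integer strictly below~$s$ is a residue (or is~$1$), so in particular~${s-1}$ is.

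For the complexity bound under the \textsc{erh}, the dominant costs are the Legendre-symbol evaluations, the search for the smallest quadratic non-residue~$s$, and the square-root computations~\FFSqrtname. Each Legendre symbol costs~${\widetilde{O}(\log^2 p)}$ by a quadratic-reciprocity/\textsc{gcd}-style evaluation, and each modular square root has cost close to a modular exponentiation, namely~${\widetilde{O}(\log^2 p)}$ as well (Tonelli--Shanks or a Cipolla-type method). The main obstacle, and the only place the \textsc{erh} enters, is bounding the number of iterations of the \textbf{while} loop: I would invoke the conditional bound that the least quadratic non-residue modulo~$p$ is~${O(\log^2 p)}$ under the \textsc{erh}. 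Combining~${O(\log^2 p)}$ loop iterations, each performing one Legendre-symbol test at cost~${\widetilde{O}(\log^2 p)}$, together with the~$O(1)$ square-root calls, yields the claimed total of~${\widetilde{O}(\log^3 p)}$ expected time, the expectation arising from the randomization inside the square-root subroutine.
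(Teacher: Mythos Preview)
Your argument is essentially the paper's: both dispatch the residue case trivially, argue that $s-1$ is a residue by minimality of $s$, that $r=ks^{-1}$ is a residue because it is a quotient of two non-residues, and then expand $a^{2}(1+c^{2})=a^{2}+(ac)^{2}$. The paper phrases the non-residue step via parity of discrete logarithms rather than multiplicativity of the Legendre symbol, but these are the same fact, and the paper likewise invokes the \textsc{erh} bound $s=O(\log^{2}p)$ on the least non-residue.

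There is one arithmetic slip in your complexity tally: $O(\log^{2}p)$ Legendre-symbol evaluations at $\widetilde{O}(\log^{2}p)$ each gives $\widetilde{O}(\log^{4}p)$, not $\widetilde{O}(\log^{3}p)$. To reach the stated bound you need each symbol in $\widetilde{O}(\log p)$, which is available either via a fast (half-\textsc{gcd}) Jacobi-symbol algorithm or by observing that every candidate $s$ is at most $O(\log^{2}p)$, so computing $\bigl(\tfrac{s}{p}\bigr)$ via reciprocity is dominated by a single reduction of $p$ modulo a tiny $s$. The paper is equally terse here, simply asserting that the $O(\log^{2}p)$ Legendre symbols dominate the square roots; once you adjust the per-symbol cost, your accounting matches.
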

\begin{proof}
If~$k$ is square then the square of one of its square roots added to the square of zero is a solution.
Otherwise, the lowest quadratic non-residue (\textsc{lqnr}) modulo~$p$ is one plus a square~$b^{2}$ ($1$ is always a square so the \textsc{lqnr} is larger than~$2$).
For any generator of~$\Z_{p}$, quadratic non-residues, as well as their inverses ($s$ is invertible as it is non-zero and~$p$ is prime), have an odd discrete logarithm.
Therefore the multiplication of~$k$ and the inverse of the \textsc{lqnr} must be a square~$a^{2}$.
This means that the relation~${k=a^{2}\bigr(1+b^{2}\bigl)=a^{2}+{(ab)}^{2}}$ holds.
Now for the running time, under \textsc{erh}, the \textsc{lqnr} should be lower than~${3\log^{2}(p)/2-44\log(p)/5+13}$~\cite[Theorem~6.35]{Wedeniwski:2001:lqnr}.
The expected number of Legendre symbol computations is~$O\bigr(\log^{2}(p)\bigl)$ and this dominates the modular square root computations.
\end{proof}
\begin{remark}
Another possibility is to use randomization: instead of using the
lowest quadratic non-residue (\textsc{lqnr}), randomly select a
non-residue~$s$, and then decrement
it until~${s-1}$ is a quadratic residue ($1$ is a square so this will
terminate)\footnote{In practice, the running time seems very close
  to that of Algorithm~\ref{alg:sosmodp} anyway, see, e.g.\ the
  implementation in Givaro rev.~7bdefe6,
  \url{https://github.com/linbox-team/givaro}.}.
Also, when computing~$t$ sum of squares modulo the same prime, one can
compute the \textsc{lqnr}
\emph{only once} to get all the sum of squares with an expected cost bounded
by~${\widetilde{O}\bigl({{\log^{3}}(p)+t{\log^{2}}(p)\bigr)}}$.
\end{remark}
\begin{remark}\label{alg:FFSoS}
Except in characteristic~$2$ or in algebraic closures, where every element is a square anyway, Algorithm~\ref{alg:sosmodp} is easily extended over any finite field: compute the \textsc{lqnr} in the base prime field, then use Tonelli-Shanks or Cipolla-Lehmer algorithm to compute square roots in the extension field.

Denote by~$\FFSoS{q}{k}$ this algorithm decomposing~$k$ as a sum of squares within any finite field~$\F_{q}$.
This is not always possible over infinite fields, but there Algorithm~\ref{alg:sosmodp} still works anyway for the special case~${k=-1}$: just run it in the prime subfield.
\end{remark}
\subsection{Conjugate transpose}\label{ssec:herk}
Note that Algorithm~\ref{alg:wishartp} remains valid if transposition is replaced by \emph{conjugate transposition}, provided that there exists a matrix~$Y$ such that~${\MatrixProduct{Y}{\CTranspose{Y}}=-\IdentityMatrix{}}$.
This is not possible anymore over the complex field, but works for any even extension field, thanks to Algorithm~\ref{alg:sosmodp}:
if~$-1$ is a square in~$\F_{q}$,
then~${Y=\sqrt{-1}\cdot\IdentityMatrixOfSize{n}}$ still works;
otherwise there exists a square root~$i$ of~$-1$ in~$\F_{q^{2}}$, from
\cref{lem:lemsqrt}.
In the latter case, thus build~$(a,b)$, both in~$\F_{q}$, such
that~${a^{2}+b^{2}=-1}$.
Now~${Y=(a+ib)\cdot{}{\IdentityMatrixOfSize{n}}}$
in~${{\F_{q^{2}}}^{n{\times}n}}$ is appropriate: indeed, since~${{q}\equiv{{3}\bmod{4}}}$, we have that~${\overline{a+ib}={(a+ib)}^{q}={a-ib}}$.
\section{Analysis and implementation}
\subsection{Complexity bounds}
\begin{thm}\label{thm:complexitybound}
\cref{alg:wishartp}
requires~${\frac{2}{2^{\omega}-3}C_{\omega}n^{\omega} +\LO{n^{\omega}}}$ field operations, over
\CC~or over any field with prime characteristic.
\end{thm}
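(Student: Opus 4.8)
The plan is to read a divide-and-conquer cost recurrence directly off \cref{alg:wishartp} and solve it for its leading term. Write $T(n)$ for the number of field operations the algorithm performs on an $\matrixsize{n}{n}$ input. Inspecting the algorithm, the dominant work splits into three recursive \syrk{} calls on the $\matrixsize{(n/2)}{(n/2)}$ blocks ($P_{1}, P_{2}, P_{5}$), two \emph{general} block products of dimension $n/2$ ($P_{3}, P_{4}$), each costing $\mathrm{MM}_{\omega}(n/2) = C_{\omega}(n/2)^{\omega} + \LO{n^{\omega}}$, together with the additive overhead: the four initial additions, the two multiplications by $Y$, and the half and complete additions forming $C$. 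This yields
\begin{equation}
T(n) = 3\,T(n/2) + 2\,\mathrm{MM}_{\omega}(n/2) + R(n),
\end{equation}
where $R(n)$ collects all the linear-algebra overhead at the top level.

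Next I would show $R(n) = \GO{n^{2}}$. The additions act on $\matrixsize{(n/2)}{(n/2)}$ blocks and are therefore each $\GO{n^{2}}$, and only finitely many occur. The two multiplications by $Y$ are also $\GO{n^{2}}$ because of the special shape of the skew-orthogonal matrices supplied earlier: over $\CC$ one takes $Y = i\,\IdentityMatrixOfSize{n/2}$, so the product is a scaling with sign flips, while in prime characteristic \cref{lem:pigeonhole} provides $Y = \begin{smatrix} a & b \\ -b & a\end{smatrix}\tensorproduct \IdentityMatrixOfSize{n/4}$, whose action amounts to four scalar-matrix multiplications and two block additions. In either case multiplying by $Y$ costs $\GO{n^{2}}$, so $R(n) = \GO{n^{2}}$. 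At each level of the recursion a skew-orthogonal matrix of the appropriate even dimension is available from \cref{lem:lemsqrt,lem:pigeonhole}, padding with a zero row and column whenever the dimension is odd, so the recursion is well defined down to constant-size base cases.

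It remains to solve the recurrence. Substituting $\mathrm{MM}_{\omega}(n/2) = C_{\omega}n^{\omega}/2^{\omega} + \LO{n^{\omega}}$ and positing $T(n) = K n^{\omega} + \LO{n^{\omega}}$, the leading terms must satisfy $K = 3K/2^{\omega} + 2C_{\omega}/2^{\omega}$, that is $K(2^{\omega}-3) = 2C_{\omega}$, giving $K = \frac{2}{2^{\omega}-3}C_{\omega}$. To justify the ansatz I would unroll the recurrence: the geometric factor between successive levels is $3/2^{\omega}$, strictly less than $1$ since $\omega > \log_{2} 3$, so the series of product contributions converges and is dominated by its top level, producing exactly $\frac{2}{2^{\omega}-3}C_{\omega}n^{\omega}$. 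The accumulated overhead forms a geometric series with ratio $3/4 < 1$ summing to $\GO{n^{2}}$, and the homogeneous part from the base cases contributes $\GO{n^{\log_{2} 3}}$; both are $\LO{n^{\omega}}$ because every admissible exponent satisfies $\omega > 2 > \log_{2} 3$. Collecting these gives $T(n) = \frac{2}{2^{\omega}-3}C_{\omega}n^{\omega} + \LO{n^{\omega}}$.

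The only delicate point is the $\LO{n^{\omega}}$ bookkeeping: the claim that the overhead is genuinely lower order uses $\omega > 2$, which would fail at the unattained value $\omega = 2$, where $2^{\omega}-3 = 1$ and the quadratic additions would enter the leading constant. Since all admissible $\omega$ are strictly larger than $2$, this causes no difficulty. Correctness of the scheme itself, which the complexity count presupposes, is exactly \cref{thm:wishartp}.
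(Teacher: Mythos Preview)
Your proof is correct and follows essentially the same route as the paper: set up the recurrence $T(n)=3T(n/2)+2\,\mathrm{MM}_{\omega}(n/2)+\GO{n^{2}}$ from the three recursive calls, two general products, and quadratic overhead (including the $Y$-multiplications), then solve it. The only cosmetic differences are that the paper pins down the exact constant in front of the $n^{2}$ term (writing the $Y$-cost as $yn^{2}$ with $y\in\{1,2,3\}$ depending on the field) and invokes the master theorem, whereas you bound the overhead by $\GO{n^{2}}$ and unroll the geometric series explicitly.
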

\begin{proof}
\cref{alg:wishartp} is applied recursively to compute three products~$P_{1}, P_{2}$ and~$P_{7}$, while~$P_{4}$ and~$P_{5}$ are computed in~$\textrm{MM}_\omega(n)=C_{\omega}n^{\omega}+\LO{n^{\omega}}$ using a general matrix multiplication algorithm.
We will show that applying the skew-orthogonal matrix~$Y$ to a~${{n}\times{n}}$ matrix costs~$yn^{2}$ for some constant~$y$ depending on the base field.
Then applying \cref{rq:7.5} thereafter, the cost~$T(n)$ of \cref{alg:wishartp} satisfies:
\begin{equation}\label{eq:complexity}
T(n) \leq 3T(n/2) + 2C_{\omega} {(n/2)}^{\omega} + (7.5+2y){(n/2)}^{2} + \LO{n^2}
\end{equation}
and~$T(4)$ is a constant.
Thus, by the master Theorem:
\begin{equation}
	T(n) \leq \frac{2C_{\omega}}{2^{\omega}-3}n^{\omega} +\LO{n^{\omega}} =
	\frac{2}{2^{\omega}-3}\textrm{MM}_{\omega}(n)+\LO{n^{\omega}}.
\end{equation}
\par
If the field is~$\mathbb{C}$ or satisfies the conditions of \cref{lem:lemsqrt}, there is a square root~$i$ of~$-1$.
Setting~${Y=i\,\IdentityMatrixOfSize{n/2}}$ yields~${y=1}$.
Otherwise, in characteristic~${p\equiv{3}\bmod{4}}$, \cref{lem:pigeonhole} produces~$Y$ equal to~${\begin{smatrix}a&b\\-b&a\end{smatrix}\tensorproduct{\IdentityMatrixOfSize{n/2}}}$ for which~${y=3}$.
As a subcase, the latter can be improved when~${p\equiv{3}\bmod{8}}$:
then~$-2$ is a square
(indeed,
$\left(\frac{-2}{p}\right)=\left(\frac{-1}{p}\right)\left(\frac{2}{p}\right)=(-1)^{\frac{p-1}{2}}(-1)^{\frac{p^2-1}{8}}=(-1)(-1)=1$).
Therefore, in this case set~${a=1}$ and~${b\equiv\sqrt{-2}\bmod{p}}$ such that the relation~${a^{2}+b^{2}=-1}$ yields~${Y=\begin{smatrix} 1 & \sqrt{-2}\\ -\sqrt{-2} & 1\end{smatrix}\tensorproduct{\IdentityMatrixOfSize{n/2}}}$ for which~${y=2}$.
\end{proof}
To our knowledge, the best previously known result was with a~$\frac{2}{2^{\omega}-4}$ factor instead, see e.g.~\cite[\S~6.3.1]{jgd:2008:toms}.
Table~\ref{tab:ffcomplex} summarizes the arithmetic complexity bound improvements.
\begin{table}[htbp]\centering\small%
\begin{tabular}{crrrr}
\toprule
Problem & Alg.\ & $\GO{n^{3}}$ & $\GO{n^{\log_2(7)}}$ & $\GO{n^{\omega}}$ \\
\midrule
\multirow{2}{*}{$\MatrixProduct{A}{\Transpose{A}} \in\F^{n{\times}n}$}
& \cite{jgd:2008:toms} & $n^{3}$ & $\frac{2}{3}\,\textrm{MM}_{\log_{2}(7)}(n)$& $\frac{2}{2^{\omega}-4}\,\textrm{MM}_{\omega}(n)$\\
& Alg.~\ref{alg:wishartp}  & $0.8 n^{3}$ & $\frac{1}{2}\,\textrm{MM}_{\log_{2}(7)}(n)$& $\frac{2}{2^{\omega}-3}\,\textrm{MM}_{\omega}(n)$ \\
\bottomrule
\end{tabular}
\caption{Arithmetic complexity bounds leading terms.}\label{tab:ffcomplex}\mvspace{-15pt}
\end{table}

Alternatively, over~$\mathbb{C}$, the~$3M$ method (Karatsuba) for
non-symmetric matrix multiplication reduces the number of
multiplications of real matrices from~$4$
to~$3$~\cite{Higham:1992:complex3M}:
if~$RR_{\omega}(n)$ is the cost of multiplying~${{n}\times{n}}$ matrices over~$\mathbb{R}$, then the~$3M$~method costs~$3RR_{\omega}(n)+\LO{n^\omega}$ operations over~$\mathbb{R}$.
Adapting this approach to the symmetric case yields a~$2M$ method to compute the product of a complex matrix by its transpose, using only~$2$ real products:
${H=\MatrixProduct{A}{\Transpose{B}}}$
and~${G=\MatrixProduct{(A+B)}{(\Transpose{A}-\Transpose{B})}}$.
Combining those into~$(G-\Transpose{H}+H)+i(H+\Transpose{H})$,
yields the product~${\MatrixProduct{(A+iB)}{(\Transpose{A}+i\Transpose{B})}}$.
This approach costs~${2RR_{\omega}}+\LO{n^{\omega}}$ operations in~$\mathbb{R}$.
\par
Classical algorithm~\cite[\S~6.3.1]{jgd:2008:toms} applies a divide and conquer approach directly on the complex field.
This would use only the equivalent of~$\frac{2}{2^{\omega}-4}$ complex floating point~$\matrixsize{n}{n}$ products.
Using the~$3M$ method for the complex products, this algorithm
uses overall~${\frac{6}{2^{\omega}-4}RR_{\omega}+\LO{n^\omega}}$ operations in~$\mathbb{R}$.
Finally, Algorithm~\ref{alg:wishartp} only costs~$\frac{2}{2^{\omega}-3}$ complex multiplications for a leading term bounded by~$\frac{6}{2^{\omega}-3}\textrm{RR}_{\omega}$, better than~$2\textrm{RR}_\omega$ for~${\omega>\log_{2}(6)\approx 2.585}$.
This is summarized in Table~\ref{tab:complexcomplex}, replacing~$\omega$ by~$3$ or~$\log_{2}(7)$.
\begin{table}[htbp]\centering\small%
\begin{tabular}{crrrr}
\toprule
Problem & Alg.\ &  $\textrm{MM}_3(n)$ & $\textrm{MM}_{\log_2 7} (n)$ & $\textrm{MM}_\omega(n)$ \\
\midrule
\multirow{2}{*}{$\MatrixProduct{A}{B} \in\CC^{n{\times}n}$}
& naive & $8n^{3}$ & $4\,\textrm{RR}_{\log_{2}(7)}(n)$& $4\,\textrm{RR}_{\omega}(n)$\\
& 3M    & $6n^{3}$ & $3\,\textrm{RR}_{\log_{2}(7)}(n)$& $3\,\textrm{RR}_{\omega}(n)$\\
\midrule
\multirow{3}{*}{$\MatrixProduct{A}{\Transpose{A}}\in\CC^{n{\times}n}$}
& 2M    & $4n^{3}$ & $2\,\textrm{RR}_{\log_{2}(7)}(n)$& $2\,\textrm{RR}_{\omega}(n)$\\
& \cite{jgd:2008:toms} & $3n^{3}$ & $2\,\textrm{RR}_{\log_{2}(7)}(n)$
& $\frac{6}{2^{\omega}-4}\,\textrm{RR}_{\omega}(n)$\\
& Alg.~\ref{alg:wishartp}  & $2.4 n^{3}$ & $\frac{3}{2}\,\textrm{RR}_{\log_{2}(7)}(n)$ & $\frac{6}{2^{\omega}-3}\,\textrm{RR}_{\omega}(n)$\\
\bottomrule
\end{tabular}
\caption{Symmetric multiplication over~$\mathbb{C}$: leading term of the cost in number of operations over~$\mathbb{R}$.}\label{tab:complexcomplex}\mvspace{-10pt}
\end{table}
\begin{remark}\label{rq:7.5}
Each recursive level of \cref{alg:wishartp} is composed of 9 block additions.
An exhaustive search on all symmetric algorithms derived from Strassen's showed that this number is minimal in this class of algorithms.
Note also that~$3$ out of these~$9$ additions in \cref{alg:wishartp} involve symmetric matrices and are therefore only performed on the lower triangular part of the matrix.
Overall, the number of scalar additions is~${6n^{2}+3/2n(n+1)=15/2n^{2}+1.5n}$, nearly half of the optimal in the non-symmetric case~\cite[Theorem~1]{bshouty:1995a}.
\end{remark}
To further reduce the number of additions, a promising approach is that undertaken in~\cite{Karstadt:2017:strassen,Beniamini:2019:fmmsd}.
This is however not clear to us how to adapt our strategy to their recursive transformation of basis.
\subsection{Implementation and scheduling}
This section reports on an implementation of Algorithm~\ref{alg:wishartp} over prime fields.
We propose in Table~\ref{tab:schedule:AAT} and Figure~\ref{fig:DAG:AAT} a schedule for the operation~${C\leftarrow\MatrixProduct{A}{\Transpose{A}}}$ using no more extra storage than the unused upper triangular part of the result~$C$.

\begin{table}[htb]
  \small
  \begin{center}
    \begin{tabular}{cll|cll}
      \toprule
      \# & operation & loc.\ & \# & operation & loc.\\
      \midrule
	1&$S_{1}=\MatrixProduct{(A_{21}-A_{11})}{Y}$	&$C_{21}$&9&$U_{1}=P_{1}+P_{5}$&$C_{12}$\\
	2&$S_{2}=A_{22}-\MatrixProduct{A_{21}}{Y}$	&$C_{12}$&&$\text{Up}(U_{1})=\Transpose{\text{Low}(U_{1})}$&$C_{12}$\\
	3&$\Transpose{P_{4}}=\MatrixProduct{S_{2}}{\Transpose{S_{1}}}$&$C_{22}$&10&$U_{2}=U_{1}+P_{4}$&$C_{12}$\\
	4&$S_{3}=S_{1}-A_{22}$&$C_{21}$&11&$U_{4}=U_{2}+P_{3}$&$C_{21}$\\
	5&$P_{5}=\MatrixProduct{S_{3}}{\Transpose{S_{3}}}$&$C_{12}$&12&$U_{5}=U_{2}+\Transpose{P_{4}}$&$C_{22}$\\
	6&$S_{4}=S_{3}+A_{12}$&$C_{11}$&13&$P_{2}=\MatrixProduct{A_{12}}{\Transpose{A_{12}}}$&$C_{12}$\\
	7&$P_{3}=\MatrixProduct{A_{22}}{\Transpose{S_{4}}}$&$C_{21}$&14&$U_{3}=P_{1}+P_{2}$&$C_{11}$\\
	8&$P_{1}=\MatrixProduct{A_{11}}{\Transpose{A_{11}}}$&$C_{11}$\\
      \bottomrule
    \end{tabular}
    \caption{%
Memory placement and schedule of tasks to compute the lower triangular part of~${C\leftarrow \MatrixProduct{A}{\Transpose{A}}}$ when~${k\leq n}$.
The block~$C_{12}$ of the output matrix is the only temporary used.
}\label{tab:schedule:AAT}\mvspace{-10pt}
  \end{center}
\end{table}
\begin{figure}[htb]
  \begin{center}
\begin{tikzpicture}%
  \matrix (m) [matrix of math nodes, row sep=.1em, column sep=4em ]
  {%
    C_{22} & C_{12} & C_{21} & C_{11} \\
          & S_{2}   & S_{1}   &       \\
     \Transpose{P_{4}}  &       & S_{3}   &       \\
          & P_{5}   &       &  S_{4}  \\
          &       & P_{3}   &       \\
          &       &       &  P_{1}  \\
          &  U_{1}  &       &       \\
          &  U_{2}  &       &       \\
   U_{5}    &       & U_{4}   &       \\
          &  P_{2}  &       &       \\
          &       &       &  U_{3}  \\
  };
  \path[-stealth]
  (m-2-2) edge (m-3-1)
  (m-2-3) edge (m-3-1)
          edge (m-3-3)
  (m-3-3) edge (m-4-2)
          edge (m-4-4)
  (m-4-4) edge (m-5-3)
  (m-4-2) edge (m-7-2)
  (m-6-4) edge (m-7-2)
  (m-7-2) edge (m-8-2)
  (m-3-1) edge (m-8-2)
          edge (m-9-1)
  (m-8-2) edge (m-9-1)
          edge (m-9-3)
  (m-5-3) edge (m-9-3)
  (m-10-2) edge (m-11-4)
  (m-6-4) edge (m-11-4);
\end{tikzpicture}\mvspace{-10pt}
\caption{\textsc{dag} of the tasks and their memory location for the computation of~${C\leftarrow \MatrixProduct{A}{\Transpose{A}}}$ presented in Table~\ref{tab:schedule:AAT}.}\label{fig:DAG:AAT}
\Description[]{\textsc{dag} of the tasks and their memory location for the computation of~${C\leftarrow \MatrixProduct{A}{\Transpose{A}}}$ presented in Table~\ref{tab:schedule:AAT}.}\mvspace{-10pt}
\end{center}
\end{figure}
\begin{table}[htb]
\small
\begin{center}
\begin{tabular}{ll|ll}
\toprule
 operation & loc.\ & operation & loc.\\
\midrule
${S_{1}=\MatrixProduct{(A_{21}-A_{11})}{Y}}$&tmp&$P_{1}=\alpha\MatrixProduct{A_{11}}{\Transpose{A_{11}}}$&tmp\\
${S_{2}=A_{22}-\MatrixProduct{A_{21}}{Y}}$&$C_{12}$&$U_{1}=P_{1}+P_{5}$&$C_{12}$\\
$\text{Up}(C_{11})=\Transpose{\text{Low}(C_{22})}$&$C_{11}$&$\text{Up}(U_{1})=\Transpose{\text{Low}(U_{1})}$&$C_{12}$\\
$\Transpose{P_{4}}=\alpha\MatrixProduct{S_{2}}{\Transpose{S_{1}}}$&$C_{22}$&$U_{2}=U_{1}+P_{4}$&$C_{12}$\\
$S_{3}=S_{1}-A_{22}$&tmp&$U_{4}=U_{2}+P_{3}$&$C_{21}$\\
$P_{5}=\alpha\MatrixProduct{S_{3}}{\Transpose{S_{3}}}$&$C_{12}$&$U_{5}=U_{2}+\Transpose{P_{4}}+\beta\Transpose{\text{Up}(C_{11})}$&$C_{22}$\\
$S_{4}=S_{3}+A_{12}$&tmp&$P_{2}=\alpha\MatrixProduct{A_{12}}{\Transpose{A_{12}}}+\beta C_{11}$&$C_{11}$\\
$P_{3}=\alpha\MatrixProduct{A_{22}}{\Transpose{S_{4}}}+\beta C_{21}$&$C_{21}$&$U_{3}=P_{1}+P_{2}$&$C_{11}$\\
\bottomrule
\end{tabular}
\caption{%
Memory placement and schedule of tasks to compute the lower triangular part of~${C\leftarrow \alpha \MatrixProduct{A}{\Transpose{A}}+\beta C}$ when~${k\leq n}$.
The block~$C_{12}$ of the output matrix as well as an~${n/2\times n/2}$ block tmp are used as temporary storages.}\label{tab:schedule:AATpC}\mvspace{-10pt}
\end{center}
\end{table}
\begin{figure}[htb]
  \begin{center}
\begin{tikzpicture}%
  \matrix (m) [matrix of math nodes, row sep=.1em, column sep=3em ]
  {%
   C_{11}  &C_{22} & C_{12} & \text{tmp} &C_{21} \\
\text{Up}(C_{11})& & S_{2}   & S_{1}   &        \\
          & \Transpose{P_{4}}&  & S_{3}   &    \\
          &       & P_{5}   & S_{4}   &       \\
          &       &       & P_{1}   &  P_{3}  \\
          &       &  U_{1}  &       &       \\
          &       &  U_{2}  &       &       \\
          &  U_{5}  &       &       &  U_{4}    \\
    P_{2}   &       &       &       &        \\
    U_{3}   &       &       &       &       \\
  };
  \path[-stealth]
  (m-1-2) edge (m-2-1)
  (m-1-1) edge (m-2-1)
  (m-2-3) edge (m-3-2)
  (m-2-4) edge (m-3-2)
          edge (m-3-4)
  (m-3-4) edge (m-4-3)
          edge (m-4-4)
  (m-4-4) edge (m-5-5) %
  (m-4-3) edge (m-6-3) %
  (m-6-3) edge (m-7-3) %
  (m-3-2) edge (m-7-3) %
          edge (m-8-2) %
  (m-5-4) edge (m-6-3) %
  (m-5-5) edge (m-8-5) %
  (m-7-3) edge (m-8-5)  %
          edge (m-8-2) %
  (m-5-4) edge[bend left] (m-10-1) %
  (m-9-1) edge (m-10-1) %
  (m-1-5) edge (m-5-5) %
  (m-2-1) edge (m-8-2) %
          edge (m-9-1); %
\end{tikzpicture}\mvspace{-10pt}
\caption{\textsc{dag} of the tasks and their memory location for the computation of~${C\leftarrow \alpha \MatrixProduct{A}{\Transpose{A}} + \beta C}$ presented in Table~\ref{tab:schedule:AATpC}.}\label{fig:DAG:AATpC}
\Description[]{\textsc{dag} of the tasks and their memory location for the computation of~${C\leftarrow \alpha \MatrixProduct{A}{\Transpose{A}} + \beta C}$ presented in Table~\ref{tab:schedule:AATpC}.}\mvspace{-10pt}
  \end{center}
  \end{figure}
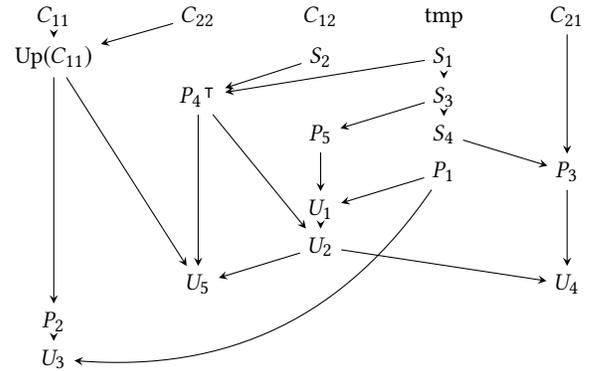
For the more general operation~${C\leftarrow \alpha \MatrixProduct{A}{\Transpose{A}} + \beta C}$, Table~\ref{tab:schedule:AATpC} and Figure~\ref{fig:DAG:AATpC} propose a schedule requiring only an additional~${{n/2}\times{n/2}}$ temporary storage.
These algorithms have been implemented as the \texttt{fsyrk} routine
in the \texttt{fflas-ffpack} library for dense linear algebra over a
finite
field~\cite[\href{https://github.com/linbox-team/fflas-ffpack/commit/0a91d61e6518568b006873076df925fcd6fcc112}{from
  commit 0a91d61e}]{fflas19}.

Figure~\ref{fig:perfs} compares the computation speed in effective
Gfops (defined as~${n^{3}/(10^{9}\times\textrm{time})}$) of this
implementation over~${\Z/131071\Z}$ with that of the double precision
\textsc{blas} routines \texttt{dsyrk}, the classical cubic-time routine
over a finite field (calling \texttt{dsyrk} and performing modular
reductions on the result), and the classical divide and conquer
algorithm~\cite[\S~6.3.1]{jgd:2008:toms}.
\begin{figure}[htb]
  \begin{center}
    \includegraphics[width=.46\textwidth]{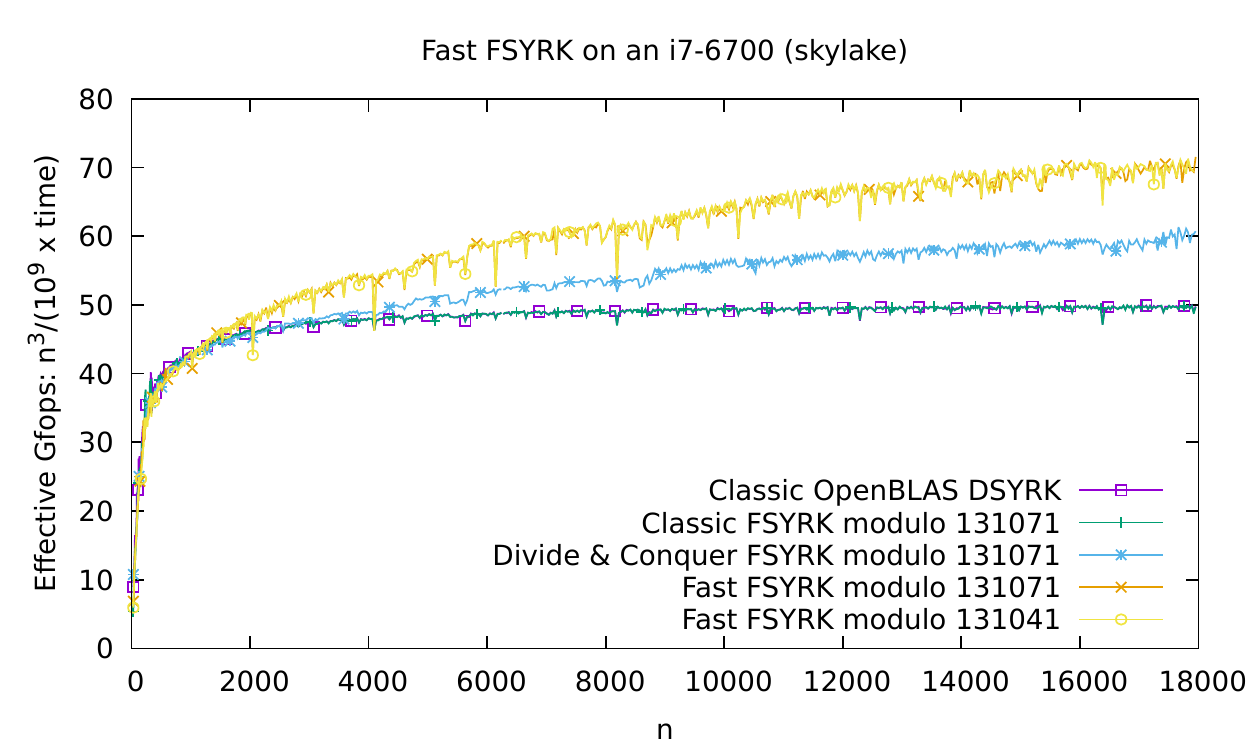}\mvspace{-10pt}
    \caption{Speed of an implementation of Algorithm~\ref{alg:wishartp}}\label{fig:perfs}%
\mvspace{-10pt}
\Description[Speed of an implementation of Algorithm~\ref{alg:wishartp}]{Speed of an implementation of Algorithm~\ref{alg:wishartp}}
  \end{center}
\end{figure}
The \texttt{fflas-ffpack} library is linked with
Open\textsc{blas}~\cite[v0.3.6]{openblas}
and compiled with \texttt{gcc-9.2} on an Intel skylake~i7-6700 running
a Debian \textsc{gnu}/Linux system (v5.2.17).
\par

The slight overhead of performing the modular reductions is quickly compensated by the speed-up of the sub-cubic algorithm (the threshold for a first recursive call is near~${n=2000}$).
The classical divide and conquer approach also speeds up the classical algorithm, but starting from a larger threshold, and hence at a slower pace.
Lastly, the speed is merely identical modulo~${131041}$, where square roots of~$-1$ exist, thus showing the limited overhead of the preconditioning by the matrix~$Y$.
\section{\syrk with block diagonal scaling}
Symmetric rank k updates are a key building block for symmetric triangular factorization
algorithms, for their efficiency is one of the bottlenecks.
In the most general setting (indefinite factorization), a block diagonal scaling by a matrix~$D$, with~$1$ or~$2$ dimensional diagonal blocks, has to be inserted within the product, leading to the operation:~${{C}\leftarrow{C -\MatrixProduct{A}{\MatrixProduct{D}{\Transpose{A}}}}}$.
\par
Handling the block diagonal structure over the course of the recursive
algorithm may become tedious and quite expensive.
For instance, a~$\matrixsize{2}{2}$ diagonal block might have to be cut by a recursive split.
We will see also in the following that non-squares in the diagonal need to be dealt with in pairs.
In both cases it might be necessary to add a column to deal with these cases: this is potentially~$\GO{\log_{2}(n)}$ extra columns in a recursive setting.
\par
Over a finite field, though, we will show in this section, how to factor the block-diagonal matrix~$D$ into~${D=\MatrixProduct{\Delta}{\Transpose{\Delta}}}$, \emph{without needing any field extension}, and then compute instead~${\MatrixProduct{(\MatrixProduct{A}{\Delta})}{\Transpose{(\MatrixProduct{A}{\Delta})}}}$.
\cref{alg:ApsiAT}, deals with non-squares and~$\matrixsize{2}{2}$ blocks only once beforehand, introducing no more than~$2$ extra-columns overall.
\cref{ssec:nonsquares} shows how to factor a diagonal matrix, without resorting
to field extensions for non-squares.
Then \cref{ssec:antiodd,ssec:antieven} show how to deal with the~${{2}\times{2}}$ blocks depending on the characteristic.
\subsection{Factoring non-squares within a finite field}\label{ssec:nonsquares}
First we give an algorithm handling pairs of non-quadratic residues.\\
\begin{algorithm}[htbp]\caption{: \texttt{nrsyf}: Sym.\ factorization. of a pair of
    non-residues}\label{alg:factnonres}
\begin{algorithmic}[1]
\Require ${(\alpha,\beta)\in{\F_{q}}^{2}}$, both being quadratic non-residues.
\Ensure ${Y\in{\F_{q}}^{\matrixsize{2}{2}}}$, s.t.~${\MatrixProduct{Y}{\Transpose{Y}}=\begin{smatrix}\alpha&0\\0&\beta\end{smatrix}}$.
\State ${{(a,b)}\assign{\FFSoS{q}{\alpha}}}$;
\hfill\Comment{${\alpha=a^{2}+b^{2}}$}
\State ${{d}\assign{a\,\FFSqrt{q}{\beta\alpha^{-1}}}}$; \hfill\Comment{${d^{2}=a^{2}\beta\alpha^{-1}}$}
\State ${{c}\assign{-bda^{-1}}}$; \hfill\Comment{${ac+bd=0}$}
\State \Return ${Y\assign \begin{smatrix}a&b\\c&d\end{smatrix}}$.
\end{algorithmic}
\end{algorithm}
\begin{proposition}\label{thm:factnonrescorrect}
 Algorithm~\ref{alg:factnonres} is correct.
 \end{proposition}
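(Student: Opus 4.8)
The plan is to verify directly that the returned matrix satisfies $\MatrixProduct{Y}{\Transpose{Y}}=\begin{smatrix}\alpha&0\\0&\beta\end{smatrix}$, after first checking that every assignment in the algorithm is well defined. Since a field of characteristic two has no quadratic non-residues, the hypothesis forces odd characteristic, and I would record this at the outset (the characteristic-two case being vacuous). A non-residue is in particular nonzero, so $\alpha^{-1}$ exists and the call $\FFSoS{q}{\alpha}$ is legitimate: by correctness of the sum-of-squares decomposition (\cref{thm:sosmodpcorrect} together with \cref{alg:FFSoS}) it returns $(a,b)$ with $a^{2}+b^{2}=\alpha$.

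The two facts that make the algorithm run are both consequences of the non-residue hypothesis. First, $a\neq 0$: otherwise $\alpha=b^{2}$ would be a square, contradicting that $\alpha$ is a non-residue; hence $a^{-1}$ exists and the assignment $c\assign -bda^{-1}$ is legitimate. Second, $\beta\alpha^{-1}$ is a quadratic residue: in odd characteristic the nonzero squares form a subgroup of index two in $\F_{q}^{\times}$, so the quotient of the two non-residues $\beta$ and $\alpha$ lands in that subgroup. Consequently $\FFSqrt{q}{\beta\alpha^{-1}}$ returns an element of $\F_{q}$ and $d$ is well defined with $d^{2}=a^{2}\beta\alpha^{-1}$, as annotated in the algorithm.

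With well-definedness in hand, the verification is a direct computation of the symmetric matrix $\MatrixProduct{Y}{\Transpose{Y}}=\begin{smatrix}a^{2}+b^{2}&ac+bd\\ac+bd&c^{2}+d^{2}\end{smatrix}$. The $(1,1)$ entry is $\alpha$ by construction, and the off-diagonal entry $ac+bd$ vanishes by the choice $c=-bda^{-1}$. For the $(2,2)$ entry I would substitute $c=-bd/a$ to obtain $c^{2}+d^{2}=d^{2}(a^{2}+b^{2})/a^{2}=d^{2}\alpha/a^{2}$, and then use $d^{2}=a^{2}\beta\alpha^{-1}$ to conclude $c^{2}+d^{2}=\beta$. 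This establishes the claimed factorization.

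The only genuinely non-routine step is the pair of structural observations above, namely that $a\neq 0$ and that $\beta\alpha^{-1}$ is a square; everything else is mechanical algebra. Both observations rest on the same fact, that in a finite field of odd characteristic the squares form an index-two subgroup, so the product or quotient of two non-residues is a residue. I would therefore organize the write-up so that this fact is stated once and then invoked twice, and I would be careful to cite the sum-of-squares machinery for the existence of the decomposition $(a,b)$ of $\alpha$.
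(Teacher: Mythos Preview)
Your proposal is correct and follows essentially the same approach as the paper: verify well-definedness (existence of $(a,b)$ via the sum-of-squares routine, that $\beta\alpha^{-1}$ is a residue, and that $a$ is invertible), then compute $\MatrixProduct{Y}{\Transpose{Y}}$ entrywise. Your argument that $a\neq 0$ (else $\alpha=b^{2}$ would be a square) is in fact slightly sharper than the paper's, which only observes that one of $a,b$ is nonzero because $\alpha\neq 0$; your version is exactly what is needed since the algorithm divides by $a$.
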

\begin{proof}
Given~$\alpha$ and~$\beta$ quadratic non-residues, the couple~$(a,b)$, such that~${\alpha=a^{2}+b^{2}}$, is found by the algorithm of Remark~\ref{alg:FFSoS}.
Second, as~$\alpha$ and~$\beta$ are quadratic non-residues, over a
finite field their quotient is a residue since:~${{\left(\beta\alpha^{-1}\right)}^{\frac{q-1}{2}}=\frac{-1}{-1}=1}$.
Third, if~${c}$ denotes~${-bda^{-1}}$ then~${c^{2}+d^{2}}$ is equal
to~${{(-bd/a)}^{2}+d^{2}}$ and thus to~${(b^{2}/a^{2}+1)d^{2}}$; this
last quantity is equal to~${(\alpha)d^{2}/a^{2}}$ and then to~${\alpha{(a\sqrt{\beta/\alpha})}^{2}/a^{2}=\alpha(a^{2}\beta/\alpha)/a^{2}=\beta}$.
Fourth,~$a$ (or w.l.o.g.~$b$) is invertible.
Indeed,~$\alpha$ is not a square, therefore it is non-zero and thus one of~$a$ or~$b$ must be non-zero.
Finally, we obtain the cancellation~${ac+bd = a(-dba^{-1})+bd =
  -db+bd=0}$ and the matrix product~$\MatrixProduct{Y}{\Transpose{Y}}$
is~${\begin{smatrix}a&b\\c&d\end{smatrix}\begin{smatrix}a&c\\b&d\end{smatrix}=\begin{smatrix}a^{2}+b^{2}&ac+bd\\ac+bd&c^{2}+d^{2}\end{smatrix}=\begin{smatrix}\alpha&0\\0&\beta\end{smatrix}}$.
\end{proof}

Using Algorithm~\ref{alg:factnonres}, one can then factor any diagonal matrix within a finite field as a symmetric product with a tridiagonal matrix.
This can then be used to compute efficiently~$\MatrixProduct{A}{\MatrixProduct{D}{\Transpose{A}}}$ with~$D$ a diagonal matrix: factor~$D$ with a tridiagonal matrix~${D=\MatrixProduct{\Delta}{\Transpose{\Delta}}}$, then pre-multiply~$A$ by this tridiagonal matrix and run a fast symmetric product on the resulting matrix.
This is shown in Algorithm~\ref{alg:diagADAT}, where the overhead,
compared to simple matrix multiplication, is only~$\GO{n^{2}}$ (that is~$\GO{n}$ square roots and~$\GO{n}$ column scalings).
\begin{algorithm}[htbp]
  \caption{\texttt{syrkd}: sym.
    matrix product
    with diagonal scaling}
    \label{alg:diagADAT}
\begin{algorithmic}[1]
 \Require{${A\in{\F_{q}}^{\matrixsize{m}{n}}}$
   and~${D=\text{Diag}(d_{1},\ldots,d_n)\in{\F_{q}}^{\!n\times n}}$}
 \Ensure{$\MatrixProduct{A}{\MatrixProduct{D}{\Transpose{A}}}$
   in~${\F_{q}}^{\!\matrixsize{m}{m}}$}

	\If{number of quadratic non-residues in~${\{d_1,\ldots,d_n\}}$ is odd}
   \ Let~$d_{\ell}$ be one of the quadratic non-residues
   \State $\bar{D} \assign \text{\text{Diag}}(d_{1},\ldots,d_{n},d_{\ell})\in{\F_{q}}^{\!\matrixsize{(n+1)}{(n+1)}}$
   \State $\bar{A} \assign \begin{smatrix}A&0\end{smatrix}\in{\F_{q}}^{{m}{\times}{(n+1)}}$
     \Comment{Augment~$A$ with a zero column}
 \Else
 \State $\bar{D} \assign \text{\text{Diag}}(d_{1},\ldots,d_n)\in{\F_{q}}^{n{\times}n}$
 \State $\bar{A} \assign A\in{\F_{q}}^{m{\times}n}$
 \EndIf
 \ForAll{quadratic residues~$d_{j}$ in~$\bar{D}$}
   \State $\bar{A}_{*,j} \assign \FFSqrt{q}{d_{j}}\cdot{}\bar{A}_{*,j}$
     \Comment{Scale col.~$j$ of~$\bar{A}$ by a sq.\ root of~$d_{j}$}
 \EndFor
 \ForAll{distinct pairs of quadratic non-residues~$(d_{i},d_{j})$ in~$\bar{D}$}
 \State $\Delta \assign \texttt{nrsyf}(d_i,d_j)$
 \Comment{$\MatrixProduct{\Delta}{\Transpose{\Delta}}=\begin{smatrix}d_{i}&0\\0&d_{j}\end{smatrix}$
   using~\cref{alg:factnonres}}
 \State $\begin{smatrix}\bar{A}_{*,i}&\bar{A}_{*,j}\end{smatrix}
 \assign \MatrixProduct{\begin{smatrix}\bar{A}_{*,i}&\bar{A}_{*,j}\end{smatrix}}{\Delta}
   $;
 \EndFor
 \State\Return $\texttt{syrk}(\bar{A})$ \Comment{$\MatrixProduct{\bar{A}}{\Transpose{\bar{A}}}$ using~\cref{alg:wishartp}}
\end{algorithmic}
\end{algorithm}
\subsection{Antidiagonal and antitriangular blocks}\label{ssec:antianti}
In general, an~$\MatrixProduct{L}{\MatrixProduct{D}{\Transpose{L}}}$ factorization may have antitriangular or antidiagonal blocks in~$D$~\cite{Dumas:2018:ldlt}.
In order to reduce to a routine for fast symmetric multiplication with diagonal scaling, these blocks
need to be processed once for all, which is what this section is about.
\subsubsection{Antidiagonal blocks in odd characteristic}\label{ssec:antiodd}
In odd characteristic, the~$2$-dimensional blocks in an~$\MatrixProduct{L}{\MatrixProduct{D}{\Transpose{L}}}$ factorization are only of the
form~$\begin{smatrix}0&\beta\\\beta&0\end{smatrix}$, and always have  the symmetric factorization:
\begin{equation}\label{eq:antidiagonal}
\begin{smatrix}1&1\\1&-1\end{smatrix}
\begin{smatrix}\frac{1}{2}\beta&0\\0&-\frac{1}{2}\beta\end{smatrix}
\Transpose{\begin{smatrix}1&1\\1&-1\end{smatrix}}
= \begin{smatrix}0&\beta\\\beta&0\end{smatrix}.
\end{equation}
This shows the reduction to the diagonal case (note the requirement that~$2$ is invertible).
\subsubsection{Antitriangular blocks in characteristic~2}\label{ssec:antieven}
In characteristic~2, some~${\matrixsize{2}{2}}$ blocks might not be reduced further than an antitriangular form:~$\begin{smatrix}0&\beta\\ \beta&\gamma\end{smatrix}$, with~${\gamma\neq 0}$.
\par
In characteristic~2 every element is a square, therefore those antitriangular blocks can be factored as shown in \cref{eq:antitriangular}:
\begin{equation}\label{eq:antitriangular} \arraycolsep=.55\arraycolsep
\begin{smatrix}0&\beta\\\beta&\gamma\end{smatrix} =
\left(\begin{smatrix}\beta\gamma^{-1/2}&0\\0&\gamma^{1/2}\end{smatrix}
\begin{smatrix}1&1\\1&0\end{smatrix}\right)
\Transpose{\left(\begin{smatrix}\beta\gamma^{-1/2}&0\\0&\gamma^{1/2}\end{smatrix}\begin{smatrix}1&1\\1&0\end{smatrix}\right)}.
\end{equation}
Therefore the antitriangular blocks also reduce to the diagonal case.%
\subsubsection{Antidiagonal blocks in characteristic~2}
The symmetric factorization in this case
might require an extra row or column~\cite{Lempel:1975:BBft}
as shown in \cref{eq:antidiagonaleven}:
\begin{equation}\label{eq:antidiagonaleven} \arraycolsep=.8\arraycolsep
\begin{smatrix}1&0\\0&\beta\end{smatrix}
\begin{smatrix}1&0&1\\0&1&1\end{smatrix}
\Transpose{\left(\begin{smatrix}1&0\\0&\beta\end{smatrix}\begin{smatrix}1&0&1\\0&1&1\end{smatrix}\right)}=\begin{smatrix}0&\beta\\\beta&0\end{smatrix}\bmod{2}.
\end{equation}
A first option is to augment~$A$ by one column for each antidiagonal block, by
applying the~$2{\times}3$ factor in \cref{eq:antidiagonaleven}.
However one can instead combine a diagonal element, say~$x$, and an antidiagonal
block as shown in \cref{eq:pentadiagonal}.
\begin{equation}\label{eq:pentadiagonal}
\begin{smatrix}\sqrt{x}&\sqrt{x}&\sqrt{x}\\1&0&1\\0&\beta&\beta\end{smatrix}
\Transpose{\begin{smatrix}\sqrt{x}&\sqrt{x}&\sqrt{x}\\1&0&1\\0&\beta&\beta\end{smatrix}}
=\begin{smatrix}x&0&0\\0&0&\beta\\0&\beta&0\end{smatrix}\bmod{2}.
\end{equation}
Hence, any antidiagonal block can be combined with any~${1{\times}1}$ block to
form a symmetric factorization.
\par
There remains the case when there are no~${1{\times}1}$ blocks.
Then, one can use \cref{eq:antidiagonaleven} once, on the first antidiagonal block, and add column to~$A$.
This indeed extracts the antidiagonal elements and creates
a~${3{\times}3}$ identity block in the middle. Any one of its three
ones can then be used as~$x$ in a further combination
with the next antidiagonal blocks.
\cref{alg:ApsiAT} sums up the use of
\cref{eq:antidiagonal,eq:antitriangular,eq:antidiagonaleven,eq:pentadiagonal}.
\begin{algorithm}[!ht]
  \caption{: \texttt{syrkbd}: sym.
    matrix product
    with block diag. scaling}\label{alg:ApsiAT}
\begin{algorithmic}[1]
\Require{$A\in{\F_{q}}^{\matrixsize{m}{n}}$;
  $B\in{\F_{q}}^{\matrixsize{n}{n}}$, block diagonal with scalar or
$2$-dimensional blocks of the form \(
\left(\begin{smallmatrix}  0&\beta\\ \beta&\gamma\end{smallmatrix}\right)
\) with $\beta\neq 0$}
\Ensure{$\MatrixProduct{A}{\MatrixProduct{B}{\Transpose{A}}} \in {\F_{q}}^{m\times m}$}
\State $\bar{A}\assign A\in{\F_{q}}^{m{\times}n}$;  $\bar{D}\assign \text{I}_{n}$
\State \InlineForAll{scalar blocks in~$B$ at position~$j$}{$\bar{D}_j\leftarrow B_{j,j}$}

\If{$q$ is odd} \Comment{Use Eq.~(\ref{eq:antidiagonal})}
 \ForAll{symmetric antidiagonal blocks in~$B$ at~${(j,j+1)}$}
    \State $\beta\assign B_{j,j+1}(=B_{j+1,j})$
    \State $\bar{D}_j\leftarrow \frac{1}{2}\beta$ ; $\bar{D}_{j+1}\leftarrow -\frac{1}{2}\beta$
    \State
    $\begin{smatrix}\bar{A}_{*,i}&\bar{A}_{*,j}\end{smatrix}\leftarrow\begin{smatrix}\bar{A}_{*,i}&\bar{A}_{*,j}\end{smatrix}\begin{smatrix}1&1\\1&-1\end{smatrix}$
\vspace{-4pt}
 \EndFor
 \Else
 \ForAll{antitriangular blocks in~$B$ at position~$(j,j+1)$}
   \State $\beta \assign B_{j,j+1}(=B_{j+1,j})$ ;
          $\delta \assign \FFSqrt{q}{B_{j+1,j+1}}$;
    \State $\bar{A}_{*,j}\assign \beta\delta^{-1}\cdot{}\bar{A}_{*,j}$
    \Comment{Scale column~$j$ of~$\bar{A}$}
    \State $\bar{A}_{*,j+1}\assign \delta\cdot{}\bar{A}_{*,j+1}$
      \Comment{Scale column~$j+1$ of~$\bar{A}$}
    \State $\bar{A}_{*,j+1}\assign\bar{A}_{*,j+1}+\bar{A}_{*,j}$
      \Comment{Use Eq.~(\ref{eq:antitriangular})}
    \State Swap columns~$j$ and~$j+1$ of~$\bar{A}$
 \EndFor
\vspace{-1pt}
 \If{there are~$n/2$ antidiagonal blocks in~$B$} \Comment{Use Eq.~(\ref{eq:antidiagonaleven})}
   \State $\beta\assign B_{1,2}(=B_{2,1})$
   \State $\bar{A}_{*,2}\assign \beta\cdot{}\bar{A}_{*,2}$ ;
          $\bar{A}\assign\begin{smatrix}\bar{A}&\bar{A}_{*,1}+\bar{A}_{*,2}\end{smatrix}\in{\F_{q}}^{m{\times}(n+1)}$
   \State $\ell \assign 1$ ; 
          $\delta \assign 1$
 \Else
   \State $\delta \assign \FFSqrt{q}{\bar{D}_{\ell,\ell}}$ where $\ell$ is s.t.  $\bar{D}_{\ell,\ell}$ is a scalar block
\EndIf
\vspace{-2pt}
\ForAll{remaining antidiagonal blocks in~$B$ at~${(j,j+1)}$}
   \State $\beta\assign B_{j,j+1}(=B_{j+1,j})$
     \Comment{Use Eq.~(\ref{eq:pentadiagonal})}
     \State $\bar{A}_{*,\ell}\assign \delta\cdot{}\bar{A}_{*,\ell}$ ;
            $\bar{A}_{*,j+1}\assign \beta\cdot{}\bar{A}_{*,j+1}$
\vspace{-1pt}
   \State $\begin{smatrix}\bar{A}_{*,\ell}&\bar{A}_{*,j}&\bar{A}_{*,j+1}\end{smatrix}\assign\MatrixProduct{\begin{smatrix}\bar{A}_{*,\ell}&\bar{A}_{*,j}&\bar{A}_{*,j+1}\end{smatrix}}{\begin{smatrix}1&1&1\\1&0&1\\0&1&1\end{smatrix}}$
\vspace{-1pt}
   \State ${\delta}\assign{1}$
 \EndFor
\EndIf
\State\Return $\texttt{syrkd}(\bar{A},\bar{D})$
\Comment{$\MatrixProduct{\bar{A}}{\MatrixProduct{\bar{D}}{\Transpose{\bar{A}}}}$ using~\cref{alg:diagADAT}}
\end{algorithmic}
\end{algorithm}

\bibliographystyle{abbrvurl}
\bibliography{strassen}
\newpage
\appendix
\section{Appendix}

\subsection{Proof of Proposition~\ref{thm:wishartp}}\label{thm:proof}
\wishartp*
\begin{proof}
If~$Y$ is skew-orthogonal, then~${\MatrixProduct{Y}{\Transpose{Y}}=-\IdentityMatrix{}}$.
First,
\begin{equation}
\label{eq:prf:u3}
	U_{3} = P_{1}+P_{2} = \MatrixProduct{A_{11}}{\Transpose{A_{11}}}
	+\MatrixProduct{A_{12}}{\Transpose{A_{12}}} = C_{11}.
\end{equation}
Denote by~$R_{1}$ the product:
\begin{equation}
\label{eq:prf:r1}
\begin{split}
R_{1} &= \MatrixProduct{A_{11}}{\MatrixProduct{Y}{\Transpose{S_{2}}}}
= \MatrixProduct{A_{11}}{\MatrixProduct{Y}{(\Transpose{A_{22}} -
\MatrixProduct{\Transpose{Y}}{\Transpose{A_{21}}})}}\\
& = \MatrixProduct{A_{11}}{(\MatrixProduct{Y}{\Transpose{A_{22}}} + \Transpose{A_{21}})}.
\end{split}
\end{equation}
Thus, as~${S_{3} = S_{1}-A_{22} = \MatrixProduct{(A_{21}-A_{11})}{Y}-A_{22}=-S_{2}-\MatrixProduct{A_{11}}{Y}}$:
\begin{equation}
\label{eq:prf:u1}
\begin{split}
	U_{1} & = P_{1} + P_{5}  = \MatrixProduct{A_{11}}{\Transpose{A_{11}}} + \MatrixProduct{S_{3}}{\Transpose{S_{3}}} \\
	& = \MatrixProduct{A_{11}}{\Transpose{A_{11}}}
	+ \MatrixProduct{(S_{2}+\MatrixProduct{A_{11}}{Y})}{(\Transpose{S_{2}}
	+\MatrixProduct{\Transpose{Y}}{\Transpose{A_{11}}} )} \\
	& = \MatrixProduct{S_{2}}{\Transpose{S_{2}}}+\Transpose{R_{1}}+R_{1}.
\end{split}
\end{equation}
And denote~${R_{2}=\MatrixProduct{A_{21}}{\MatrixProduct{{Y}}{\Transpose{A_{22}}}}}$, so that:
\begin{equation}
\label{eq:prf:s2s2T}
\begin{split}
	\MatrixProduct{S_{2}}{\Transpose{S_{2}}} & = \MatrixProduct{(A_{22} -
	\MatrixProduct{A_{21}}{Y})}{(\Transpose{A_{22}} -
	\MatrixProduct{\Transpose{Y}}{\Transpose{A_{21}}})} \\
	& = \MatrixProduct{A_{22}}{\Transpose{A_{22}}}
		-\MatrixProduct{A_{21}}{\Transpose{A_{21}}}-R_{2}-\Transpose{R_{2}}.
\end{split}
\end{equation}
Furthermore, from Equation~(\ref{eq:prf:r1}):
\begin{equation}
\label{eq:prf:r1p4}
\begin{split}
	  R_{1} + P_{4} & = R_{1} + \MatrixProduct{S_{1}}{\Transpose{S_{2}}}\\
	& = R_{1} + \MatrixProduct{(A_{21} - A_{11})}{\MatrixProduct{Y}{(\Transpose{A_{22}} - \MatrixProduct{\Transpose{Y}}{\Transpose{A_{21}}})}}\\
	& = \MatrixProduct{A_{11}}{(\MatrixProduct{Y}{\Transpose{A_{22}}}
	+\Transpose{A_{21}})} +
	\MatrixProduct{S_{1}}{\Transpose{S_{2}}} \\
	& = \MatrixProduct{A_{21}}{\MatrixProduct{Y}{\Transpose{A_{22}}}}
	+ \MatrixProduct{A_{21}}{\Transpose{A_{21}}}
	  = R_{2} + \MatrixProduct{A_{21}}{\Transpose{A_{21}}}.
\end{split}
\end{equation}
Therefore, from Equations~(\ref{eq:prf:u1}), (\ref{eq:prf:s2s2T}) and~(\ref{eq:prf:r1p4}):
  \begin{equation}\label{eq:prf:u5}\begin{split}
      U_{5} & = U_{1} + P_{4} + \Transpose{P_{4}}
	   = \MatrixProduct{S_{2}}{\Transpose{S_{2}}}+ R_{1} + \Transpose{R_{1}}+P_{4}+\Transpose{P_{4}} \\
	  & = \MatrixProduct{A_{22}}{\Transpose{A_{22}}}
	  +(-1+2)\MatrixProduct{A_{21}}{\Transpose{A_{21}}}  = C_{22}.
    \end{split}\end{equation}
And the last coefficient~$U_{4}$ of the result is obtained from Equations~(\ref{eq:prf:r1p4}) and~(\ref{eq:prf:u5}):
  \begin{equation}\label{eq:prf:u4}\begin{split}
	  &  U_{4} = U_{2} + P_{3} = U_{5} - \Transpose{P_{4}} +P_{3}\\
	  & = U_{2} + \MatrixProduct{A_{22}}{(\Transpose{A_{12}}+\MatrixProduct{\Transpose{Y}}{\Transpose{A_{21}}}-\MatrixProduct{\Transpose{Y}}{\Transpose{A_{11}}}-\Transpose{A_{22}})}\\
	  & = \MatrixProduct{A_{21}}{\Transpose{A_{21}}} - \Transpose{P_{4}}+ \MatrixProduct{A_{22}}{(\Transpose{A_{12}}+\MatrixProduct{\Transpose{Y}}{\Transpose{A_{21}}}-\MatrixProduct{\Transpose{Y}}{\Transpose{A_{11}}})}\\
	  & = \Transpose{R_{1}}-\Transpose{R_{2}}+ \MatrixProduct{A_{22}}{(\Transpose{A_{12}}+\MatrixProduct{\Transpose{Y}}{\Transpose{A_{21}}}-\MatrixProduct{\Transpose{Y}}{\Transpose{A_{11}}})}\\
	  & = \Transpose{R_{1}}+ \MatrixProduct{A_{22}}{(\Transpose{A_{12}}-\MatrixProduct{\Transpose{Y}}{\Transpose{A_{11}}})} \\
	  & =\MatrixProduct{A_{21}}{\Transpose{A_{11}}}+ \MatrixProduct{A_{22}}{\Transpose{A_{12}}} = C_{21}.
    \end{split}
\end{equation}
Finally,~${P_{1}=\MatrixProduct{A_{11}}{\Transpose{A_{11}}}}$,~${P_{2}=\MatrixProduct{A_{12} }{\Transpose{A_{12}}}}$, and~${P_{5}=\MatrixProduct{S_{3}}{\Transpose{S_{3}}}}$ are symmetric by construction.
So are therefore,~${U_{1}=P_{1}+P_{5}}$,~${U_{3}=P_{1}+P_{2}}$ and~${U_{5}=U_{1}+(P_{4}+\Transpose{P_{4}})}$.
\end{proof}

\subsection{Threshold in the theoretical number of operations for dimensions that are a power of two}
Here, we look for a theoretical threshold where our fast symmetric algorithm performs less arithmetic operations than the classical one.
Below that threshold any recursive call should call a classical algorithm for~$\MatrixProduct{A}{\Transpose{A}}$.
But, depending whether padding or static/dynamic peeling is used, this threshold varies.
For powers of two, however, no padding nor peeling occurs and we thus
have a look in this section of the thresholds in this case.

\begin{table}[htbp]
\footnotesize
\begin{center}
\begin{tabular}{|c||c|c||c|c|c|c|c|c|}
\hline
\multicolumn{3}{|c||}{n} & 4 & 8 & 16 & 32 & 64 & 128\\
\hline
\textsc{syrk} &&&70 & 540 & 4216 & 33264 & 264160 & 2105280\\
\hline
& Rec. & SW & & & & & &\\
\hline
Syrk-i&\multirow{3}{*}{1 } & \multirow{3}{*}{0}&\color{blue}\bf 70 &
\color{blue}\bf 540 & \color{blue}\bf 4216 & 33264 & 264160 &
2105280\\
G0-i&&&81 & 554 & \color{orange}\bf 4020 & 30440 & 236496 & 1863584\\
G1-i&&&89 & 586 & \color{red}\bf 4148 & 30952 & 238544 & 1871776\\
G2-i&&&97 & 618 & 4276 &  31464 & 240592 & 1879968 \\
G3-i&&&105 & 650 & 4404 & \color{green}\bf 31976 & 242640 & 1888160\\
\hline
Syrk-i&\multirow{3}{*}{2 } & \multirow{3}{*}{1}&90 & 604 & 4344 &
\color{blue}\bf 32752 & 253920 & 1998784\\
G0-i&&& & 651 & 4190 & \color{orange}\bf 29340 & 217784 & 1674096\\
G1-i&&& & 707 & 4414 & \color{red}\bf 30236 & 221368 & 1688432\\
G2-i&&& & 763 & 4638 & \color{violet}\bf 31132 & 224952 & 1702768 \\
G3-i&&& & 819 & 4862 & 32028 & 228536 & 1717104\\
\hline
Syrk-i&\multirow{3}{*}{3 } & \multirow{3}{*}{2}& & 824 & 5048 & 34160
& \color{blue}\bf 248288 & 1886144\\
G0-i&&& &  & 4929 & 30746 & \color{orange}\bf 210900 & 1546280\\
G1-i&&& &  & 5225 & 31930 & \color{red}\bf 215636 & 1565224\\
G2-i&&& &  & 5521 & 33114 & \color{violet}\bf 220372 & 1584168\\
G3-i&&& &  & 5817 & 34298 & \color{green}\bf 225108 & 1603112\\
\hline
Syrk-i&\multirow{3}{*}{4 } & \multirow{3}{*}{3}& &  & 6908 & 40112 &
260192 & \color{blue}\bf 1838528\\
G0-i&&& &  &  & 36099 & 221390 & \color{orange}\bf 1500540\\
G1-i&&& &  &  & 37499 & 226990 & \color{red}\bf 1522940\\
G2-i&&& &  &  & 38899 & 232590 & \color{violet}\bf 1545340\\
G3-i&&& &  &  & 40299 & 238190 & \color{green}\bf 1567740\\
\hline
\end{tabular}
\caption{Number of arithmetic operations in the multiplication an~$\matrixsize{n}{n}$ matrix by its transpose:
{\color{blue}\bf blue} when Syrk-i (using Strassen-Winograd
with~${i-1}$ recursive levels) is better than other Syrk;
{\color{orange}\bf orange}/{\color{red}\bf red}/{\color{violet}\bf violet}/{\color{green}\bf
  green} when ours (using Strassen-Winograd with~${i-1}$ recursive
levels, and {\color{orange}\bf G0-i} for~$\CC$ / {\color{red}\bf G1-i}
if~$-1$ is a square / {\color{violet}\bf G2-i} or {\color{green}\bf
  G3-i} otherwise, depending whether $-2$ is a square or not) is better than others.}\label{tab:winolevels}
\end{center}
\end{table}

\par
First, from Section~\ref{ssec:skeworthmat}, over~$\CC$, we can choose~${Y=i\,\IdentityMatrixOfSize{n}}$.
Then multiplications by~$i$ are just exchanging the real and imaginary parts. 
In Equation~(\ref{eq:complexity}) this is an extra cost of~${y=0}$ arithmetic operations in usual machine representations of complex numbers.
Overall, for~${y=0}$ (complex case),~${y=1}$ ($-1$ a square in the finite field) or~${y=3}$ (any other finite field), the dominant term of the complexity is anyway unchanged, but there is a small effect on the threshold.
In the following, we denote by~$G0,G1$ and~$G3$ these three variants.

\par
More precisely, we denote by \textsc{syrk} the classical multiplication of a matrix by its transpose.
Then we denote by Syrk-i the algorithm making four recursive calls and two calls to a generic matrix multiplication via Strassen-Winograd's algorithm, the latter with~${i-1}$ recursive calls before calling the classical matrix multiplication.
Finally G1-i (resp.\ G3-i) is our Algorithm~\ref{alg:wishartp} when~$-1$ is a square (resp.\ not a square), with three recursive calls and two calls to Strassen-Winograd's algorithm, the latter with~${i-1}$ recursive calls.
\par
Now, we can see in Table~\ref{tab:winolevels} in which range the thresholds live.
For instance, over a field where~$-1$ is a square, Algorithm~\ref{alg:wishartp} is better for~${{n}\geq{16}}$ with~$1$ recursive level (and thus~$0$ recursive levels for Strassen-Winograd), for~${{n}\geq{32}}$ with~$2$ recursive levels, etc.
Over a field where~$-1$ is not a square, Algorithm~\ref{alg:wishartp} is better for~${{n}\geq{32}}$ with~$1$ recursive level, for~${{n}\geq{64}}$ with~$3$ recursive levels, etc.

\end{document}